%% file: paper__TCOM_v3.tex
\documentclass[journal,10pt]{IEEEtran}
\usepackage{cite}
\usepackage{graphicx,color,epsfig,rotating}
\usepackage{amsfonts,amsmath,amssymb,bbm}
\usepackage{setspace}
\usepackage{algorithm}
\usepackage[algo2e]{algorithm2e} 
\usepackage{subcaption}
\usepackage{mdwtab}
\usepackage{placeins} 
\usepackage{psfrag, graphicx}
\usepackage[latin1]{inputenc}
\usepackage{multirow}
\usepackage{stfloats}
\usepackage{tabularx} 
\usepackage{booktabs} 
\usepackage{url}
\usepackage{bm}
\usepackage{pstricks}
 \allowdisplaybreaks

\input{macros}

\newcommand{\xz}[1]{{\color{red} [XZ COMMENT: #1]}}

\title{Information-Theoretic Decentralized Secure Aggregation with User Dropouts}
\input{author_journal}

\begin{document}
\maketitle
\IEEEpeerreviewmaketitle

\begin{abstract}
This paper investigates the fundamental limits of information-theoretic decentralized secure aggregation (DSA) with user dropouts. We consider a fully decentralized network where $K$ users communicate over broadcast channels without a trusted aggregation server. 
Each user holds a private input and aims to recover the sum of the surviving users' inputs (users may drop) while ensuring that no additional information about individual inputs is revealed  to that user, 
even if it can collude with other users.
A two-round communication protocol is considered, where we assume at least $U$ users survive  and each user can collude with  at most $T$ other users.
For this setting, the optimal communication rate region is fully characterized:  we show that DSA  is infeasible if $U\le T+1$; otherwise, the optimal rate region is given by $R_1\geq 1$ and $R_2\geq \frac{1}{U-T-1}$, where $R_1$ and $R_2$ denote the first- and second-round communication rates, respectively. 
The proposed  aggregation scheme is based on correlated secret keys constructed from $(T+1)$-private maximum distance separable (MDS) matrices, which simultaneously provide robustness against user dropouts and security against collusion. 
We also derive tight converse bounds that establish  the optimality of the proposed scheme.
Our result shows that  the optimal second-round communication rate depends only on the effective redundancy level $U-T-1$ regardless the total number of users.
\end{abstract}

\begin{IEEEkeywords}
Secure aggregation, decentralized,   security,  dropout, collusion, federated learning
\end{IEEEkeywords}

\section{Introduction}
The rapid growth of distributed learning and collaborative computation has significantly accelerated the development of federated learning (FL) and related distributed optimization frameworks~\cite{konecny2016federated,pmlr-v54-mcmahan17a,KairouzFL,9084352,yang2018applied}. In these systems, users collaboratively compute global statistics or train machine learning models while keeping their local datasets private. A fundamental primitive enabling such privacy-preserving computation is \emph{secure aggregation} (SA)~\cite{bonawitz2016practical,bonawitz2017practical}, where the sum of users' inputs (\eg, local models in FL) is computed by an \agg  server while  the server is prevented from  inferring \indiv inputs 
beyond the desired sum.
The information-theoretic formulation of secure aggregation was first studied by Zhao and Sun~\cite{zhao2023secure}. Since then, substantial efforts have been devoted to improving robustness, communication efficiency, and privacy guarantees~\cite{liu2022efficient,zhao2022mds,li2025weakly,Li_Zhang_WeaklyHSA,zhang_mshsa_online2025,Zhang_Wan_HSA,Li_Zhang_MSSA}. 
Most existing works focus on centralized architectures in which a server coordinates the aggregation process.

Recently, decentralized secure aggregation (DSA)~\cite{Zhang_Li_Wan_DSA,Li_Zhang_GroupwiseDSA,Li_Zhang_WeaklyDSA} has attracted increasing attention, motivated by scenarios where relying on a trusted server is undesirable due to scalability, privacy, or communication constraints. In decentralized systems, users communicate directly with one another and collaboratively recover the desired aggregation. Compared with the centralized setting, DSA introduces new challenges because every user simultaneously acts as both a transmitter and a decoder.
A major practical challenge in DSA is robustness against user \emph{dropouts}~\cite{9834981,so2022lightsecagg,jahani2022swiftagg,li_Zhang_HSAdropout,zhang2025secure}, where users may fail  due to unreliable communication links, device failures, or asynchronous participation. Since the set of surviving users is generally unknown in advance, the aggregation protocol must guarantee reliable input sum recovery under arbitrary dropout patterns. Another important challenge is security
under \emph{collusion}~\cite{Li_Zhang_GroupwiseDSA,xu2025HSA,Li_Zhang_WeaklyDSA}, where multiple users may collude to infer information about non-colluding users' private inputs.

Motivated by the above considerations, this paper studies the fundamental communication limits of information-theoretic DSA with user dropouts. We consider a fully connected network consisting of $K$ users connected through broadcast channels. Communication takes place in \tit{two rounds}. In the first round, each user broadcasts a message generated from its input and secret keys.
Due to user dropouts, only a subset $\mathcal{U}_1 \subseteq \{1, \cdots,K\}$ of users successfully survives the first round. In the second round, additional dropouts may also occur, resulting in a surviving set $\mathcal{U}_2\subseteq \mathcal{U}_1$. After the two rounds, 
every user in $\mathcal{U}_2$ aims to recover the aggregate
$\sum_{k\in\mathcal{U}_1} W_k$, where $W_k$ denotes the private input of user $k$.

The goal of this work is to characterize the optimal communication rates of DSA under user dropout and collusion constraints. We establish a  feasibility threshold showing that DSA is feasible if and only if
$U>T+1$.
In this regime, we characterize the optimal communication rate region which consists of all  \achvb rates tuples of the two rounds.
In particular, we show that the optimal first-round communication rate is
$
R_1^*=1
$. 
while the optimal second-round communication rate is
$
R_2^*=\frac{1}{U-T-1}
$.
Our achievable scheme is based on vector linear coding and correlated keys generated using $(T+1)$-private MDS matrices. The converse proofs rely on entropy inequalities and Shannon-theoretic secrecy arguments~\cite{Shannon1949,Sun_Jafar_SPIR,Jia_Sun_Jafar}. Compared with existing secure aggregation protocols~\cite{user_held,aggregation,aggregation_log,aggregation_turbo,aggregation_fast,bonawitz2019federated,Choi_Sohn_Han_Moon},
the proposed framework provides information-theoretic security guarantees while explicitly capturing decentralized communication, arbitrary user dropout, and collusion resistance within a unified framework.

\subsection{Summary of Contributions}

The main contributions of this paper are:

\begin{itemize}

\item We study a two-round DSA problem with information-theoretic security guarantee under user dropout and collusion. Unlike conventional centralized secure aggregation systems that rely on a trusted server to coordinate decoding, the considered decentralized model requires each surviving user to simultaneously act as both a transmitter and a decoder through peer-to-peer broadcast communications. This fundamentally changes the structure of the problem, since robustness against dropout and collusion must now be guaranteed for all surviving users without  the presence of a centralized \agg server.

\item We characterize the feasibility condition of DSA with dropout. Specifically, we prove that secure aggregation is feasible if and only if
$U>T+1$. 
The result reveals a sharp threshold determined by the interplay between the minimum number of surviving users and the collusion level. In the feasible regime, we completely characterize the optimal communication rate region by deriving matching achievability and converse bounds.

\item We develop an optimal achievable scheme based on vector linear coding and correlated keys generated from $(T+1)$-private MDS matrices. The proposed construction enables every surviving user to recover the desired aggregation despite arbitrary dropout patterns, while simultaneously preserving information-theoretic privacy against colluding users. The converse proofs combine entropy inequalities and Shannon-theoretic secrecy arguments to establish the optimality of the proposed communication rates.

\item Our results show that the optimal communication efficiency depends only on the effective redundancy level
$
U-T-1
$,
and is independent of the total number of users $K$. This provides a new insight into DSA: scalability does not  affect the optimal communication cost, whereas the balance between dropout tolerance and collusion resistance completely determines the achievable efficiency.

\end{itemize}

\section{Problem Statement}
\label{sec: problem description}

We consider a DSA problem with $K \geq 3$ users, where each user is connected to all others via an error-free broadcast channel, as illustrated in Fig.~\ref{fig:model}.
\begin{figure}[ht]
    \centering
    \includegraphics[width=0.48\textwidth]{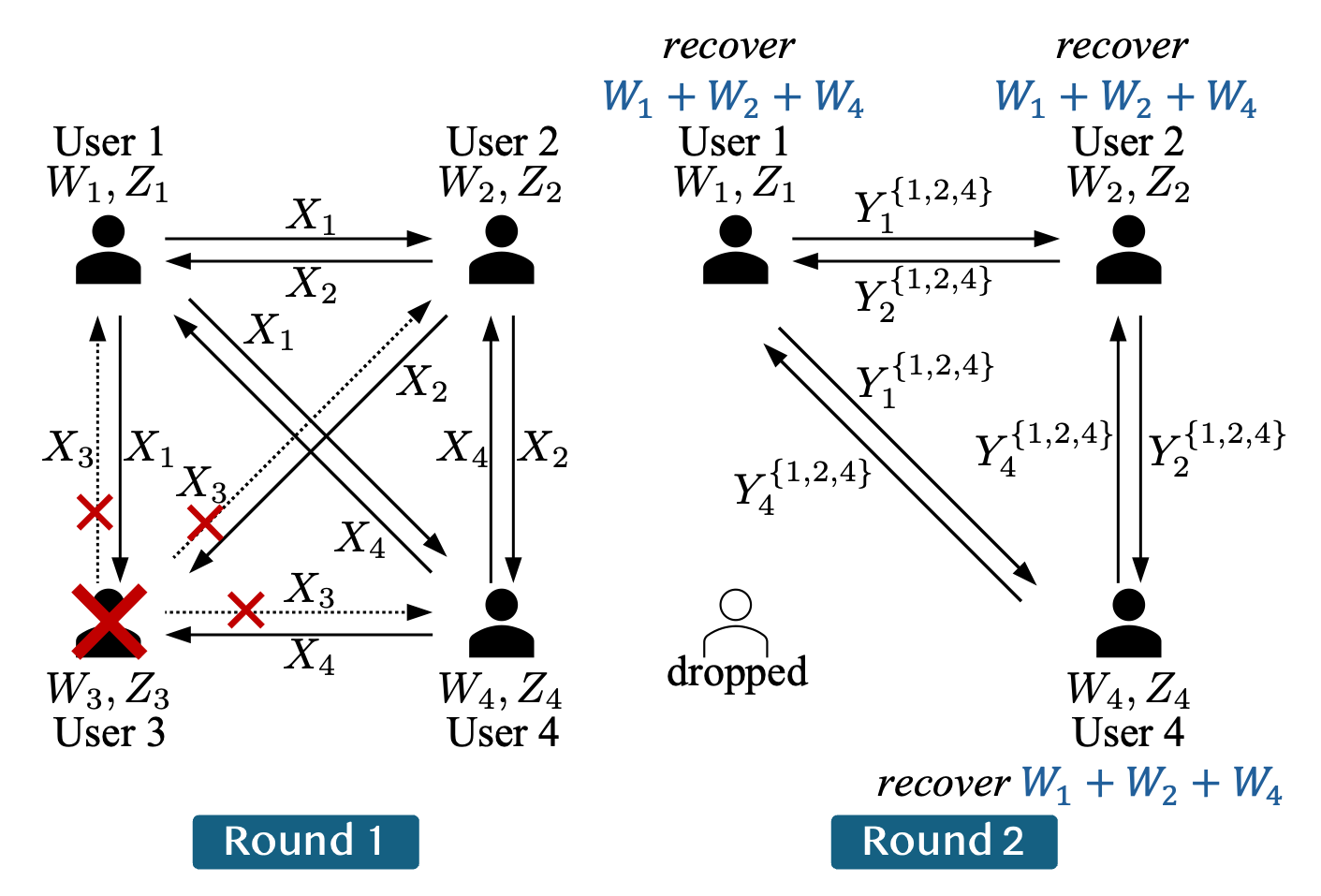}
    \caption{DSA with $K=4$, $U=3$, and $T=0$. In the first round, each user $k$ broadcasts $X_k$ to all other users. In this example, User~3 drops out over first round, and hence the set of surviving user is $\mathcal{U}_1=\{1,2,4\}$. In the second round, each surviving user $k\in\mathcal{U}_1$ broadcasts $Y_k^{\mathcal{U}_1}$. From the received messages and its own input and secret key, each surviving user must recover $W_1+W_2+W_4$ while learn nothing more.}
    \label{fig:model}
\end{figure}
Each user $k \in [K] \triangleq \{1,2,\ldots,K\}$ holds an input vector $W_k$ and a key variable $Z_k$. 
The inputs $\{W_k\}_{k\in[K]}$ are mutually independent, and each $W_k$ is an $L \times 1$ vector whose entries are i.i.d. uniformly distributed over $\mathbb{F}_q$. Furthermore, $\{W_k\}_{k\in[K]}$ is independent of $\{Z_k\}_{k\in[K]}$, i.e.,
\begin{align}
   &H\left(\{W_k\}_{k\in[K]}, \{Z_k\}_{k\in[K]}\right)\notag\\
   =& \sum_{k\in[K]} H(W_k) + H\!\left(\{Z_k\}_{k\in[K]}\right), \label{ind} \\
   &H(W_k) = L \quad (\text{in $q$-ary units}), \quad \forall k \in [K]. \label{h2}
\end{align}

To enable the computation of the desired sum, we consider the following two-round \comm protocol.

\textbf{First Round:}
Each user $k \in [K]$ broadcasts a message $X_k$, consisting of $L_X$ symbols over $\mathbb{F}_q$, which is a deterministic function of its local input and key:
\begin{align}
H(X_k \mid W_k, Z_k) = 0, \quad \forall k \in [K]. \label{messageX}
\end{align}
Due to possible user dropouts\footnote{A user is said to drop out if it fails to participate in the broadcast phase, i.e., it neither transmits nor receives any message.},
the set of users whose transmissions are successfully received is denoted by $\mathcal{U}_1 \subseteq [K]$, where $|\mathcal{U}_1| \geq U$\footnote{$U$ is a design parameter specifying the minimum number of users required for successful decoding.} and $1 \leq U \leq K-1$. 
The set $\mathcal{U}_1$ is arbitrary subject to the cardinality constraint.
Each user $u \in [K]$ observes the messages $\{X_k\}_{k\in \mathcal{U}_1\setminus\{u\}}$.

\textbf{Second Round:}
Each user determines its participation status based on its first-round observations. 
A user that fails to receive the required messages declares itself as dropped out and remains silent; otherwise, it is active and can infer the surviving set $\mathcal{U}_1$. 
Only users in $\mathcal{U}_1$ participate in the second round.
Upon identifying $\mathcal{U}_1$, each user $k \in \mathcal{U}_1$ broadcasts a message $Y^{\mathcal{U}_1}_k$ of length $L_Y$ over $\mathbb{F}_q$, satisfying
\begin{align}
H\Big(Y^{\mathcal{U}_1}_k \Big| W_k, Z_k\Big) = 0, 
\quad \forall k \in \mathcal{U}_1,\ \forall \mathcal{U}_1 \subseteq [K]: |\mathcal{U}_1| \geq U. 
\label{messageY}
\end{align}
Further dropouts may occur in the second round, and the set of surviving users is denoted by $\mathcal{U}_2$, where $\mathcal{U}_2 \subseteq \mathcal{U}_1$ and $|\mathcal{U}_2| \geq U$. 
Each user $u \in \mathcal{U}_2$ receives the messages $\{Y^{\mathcal{U}_1}_k\}_{k \in \mathcal{U}_2 \setminus \{u\}}$.

\textbf{Correctness:}
From the messages received in both rounds from surviving users, together with its own information, each user $u \in \mathcal{U}_2$ must be able to decode the desired sum $\sum_{k \in \mathcal{U}_1} W_k$ with zero error\footnote{The results also hold under vanishing error and leakage, i.e., when 0 is replaced by $o(L)$ in (\ref{correctness}) and (\ref{security}).}.
Formally, for all $\mathcal{U}_2 \subseteq \mathcal{U}_1 \subseteq [K]$ with $|\mathcal{U}_2| \geq U$,
\begin{align}
\mbox{[Correctness]}~~
&H\Bigg(\sum_{k \in \mathcal{U}_1} W_k \,\Bigg|\, 
\{X_k\}_{k \in \mathcal{U}_1 \setminus \{u\}}, 
\{Y^{\mathcal{U}_1}_k\}_{k \in \mathcal{U}_2 \setminus \{u\}}, \notag\\
&\qquad W_u, Z_u \Bigg) = 0, \quad \forall u \in \mathcal{U}_2.
\label{correctness}
\end{align}

\textbf{Security:}
We require that security holds for any user $u \in [K]$\footnote{The adversary can be any user, regardless of whether it belongs to the surviving set $\mathcal{U}_1$.}, even if it observes all transmitted messages and may collude with any set of at most $T$ users, where $1 \leq T \leq K-2$.  
Specifically, user $u$ should not learn any additional information about $\{W_k\}_{k \in [K]}$ beyond the sum $\sum_{k \in \mathcal{U}_1} W_k$ and the information available from colluding users. Formally, for any $\mathcal{U}_1, \mathcal{T} \subseteq [K]$ with $|\mathcal{U}_1| \geq U$ and $|\mathcal{T}| \leq T$,
\begin{align}
\mbox{[Security]}~
&I\Bigg(\{W_k\}_{k \in [K]}; 
\{X_k\}_{k \in [K]\setminus \{u\}}, 
\{Y^{\mathcal{U}_1}_k\}_{k \in \mathcal{U}_1 \setminus \{u\}} \Bigg| \notag\\
& \sum_{k \in \mathcal{U}_1} W_k, 
\{W_k, Z_k\}_{k \in \mathcal{T} \cup \{u\}} \Bigg) = 0, 
 \forall u \in [K].
\label{security}
\end{align}

\begin{remark}[Delayed Message Availability at User $u$]
In the security constraint, user $u$ is assumed to have access to all first-round messages, including those from users that subsequently drop out, as well as all second-round messages from surviving users. This models a delayed message availability scenario in which messages from dropped users may eventually become available, resulting in a strictly stronger security requirement.
\end{remark}

\textbf{Rates:}
The communication rates are defined as
\begin{align}
R_1 \triangleq \frac{L_X}{L}, \quad 
R_2 \triangleq \frac{L_Y}{L},
\label{rate}
\end{align}
where $R_1$ and $R_2$ denote the first-round and second-round message rates, respectively.
A rate pair $(R_1, R_2)$ is said to be achievable if there exists a secure aggregation scheme, i.e., a design of the secret keys $\{Z_k\}_{k \in [K]}$ and encoding functions for $\{X_k\}_{k \in [K]}$ and $\{Y^{\mathcal{U}_1}_k\}_{k \in [K]}$, that achieves the rates  $R_1$ and $R_2$ simultaneously while satisfying the correctness and security constraints in (\ref{correctness}) and (\ref{security}).
The  optimal rate region $\mathcal{R}^*$ is defined as
the closure of all achievable rate pairs.

\section{Main Result}\label{sec:result}

Theorem~\ref{thm:main} characterizes the optimal communication rate region for the two-round DSA problem.

\begin{theorem}\label{thm:main}
Consider the DSA problem with $K \geq 3$ users, where at least $U$ users remain active and at most $T$ users may collude, with $1 \leq U \leq K-1$ and $0 \leq T \leq K-3$. The optimal rate region is given by
\begin{align}
\Rc^*=
\begin{cases}
\emptyset & \trm{if}~U \leq T+1, \\
\big\{ (R_1, R_2) :  R_1 \geq 1, R_2 \geq \frac{1}{U-T-1} \big\} & \trm{if}~ U > T+1.
\end{cases}
\end{align}
\end{theorem}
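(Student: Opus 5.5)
The proof splits into three parts: achievability for $U>T+1$, converse bounds $R_1\ge 1$ and $R_2\ge \frac{1}{U-T-1}$, and infeasibility when $U\le T+1$.

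\textbf{Achievability.} I would use vector linear coding with $L=U-T-1$ symbols per input, so each $W_k\in\mathbb{F}_q^{U-T-1}$.

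\emph{Keys.} Generate i.i.d.\ uniform vectors $N_1,\dots,N_{U}$, each of length $U-T-1$. Set $S_k=\sum_j h_{k,j}N_j$ using a $K\times U$ MDS matrix whose structure makes $(S_k)$ $(T+1)$-private. In addition, let $Z_k$ contain an individual mask $S_k'$ such that $\sum_{k\in\mathcal{U}_1}S_k'$ can be expressed through shares. Concretely, I would take $Z_k=\{S_k,\text{shares}\}$ built so that:
\begin{itemize}
\item $X_k=W_k+S_k$ has $L_X=L$, giving $R_1=1$;
\item given $\mathcal{U}_1$, each user sends a single symbol $Y_k^{\mathcal{U}_1}=\langle c_k^{\mathcal{U}_1},\text{its share}\rangle$, giving $R_2=\frac{1}{U-T-1}$.
\end{itemize}

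\emph{Decoding.} Any $U-1$ received shares, plus the receiver's own share, determine $\sum_{k\in\mathcal{U}_1}S_k$. This works because the shares form a degree-$(U-1)$ Shamir/MDS code whose top $U-T-1$ coefficients carry the key sum and whose lower $T$ coefficients are uniform noise, as in MDS-based secret sharing with $T$-privacy. The receiver then subtracts $\sum_{k\in\mathcal{U}_1}S_k$ from $\sum_{k\in\mathcal{U}_1}X_k$ to obtain $\sum_{k\in\mathcal{U}_1}W_k$.

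\emph{Security.} I would show that the keys seen by $\mathcal{T}\cup\{u\}$ (at most $T+1$ users), together with all $X$'s and $Y$'s, reveal only $\sum_{k\in\mathcal{U}_1}S_k$. This reduces to a rank computation: the colluders' views are $(T+1)$-private, and the masks of non-colluding users are uniform conditioned on the key sum.

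\textbf{Converse for $R_1\ge1$.} Consider a user $u$ and a surviving set $\mathcal{U}_1\ni u$, and let $k\in\mathcal{U}_1\setminus\{u\}$. Suppose user $k$ then drops in the second round while $\mathcal{U}_2=\mathcal{U}_1\setminus\{k\}$. Such a $\mathcal{U}_2$ is valid whenever $|\mathcal{U}_1|\ge U+1$, which is possible since $U\le K-1$. Then $W_k$ influences the sum only through $X_k$. I would argue via the chain
\[
H(X_k)\ge I\bigl(X_k;\textstyle\sum W\mid \text{rest}\bigr)=L,
\]
using independence of $W_k$ from all other variables and from the $Y$'s of the other users.

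\textbf{Converse for $R_2\ge\frac{1}{U-T-1}$.} Fix $|\mathcal{U}_1|=U$, a receiver $u$, and colluders $\mathcal{T}\subset\mathcal{U}_1\setminus\{u\}$ with $|\mathcal{T}|=T$. The remaining $U-T-1$ users' $Y$-messages must carry the key sum, so I would aim to show
\[
H\bigl(\{Y_k\}_{k\in\mathcal{U}_1\setminus(\mathcal{T}\cup\{u\})}\bigr)\ge L.
\]
By security, the first-round messages $X$ together with the colluders' information reveal nothing about $\sum W$ beyond what is already known. By correctness, adding the $Y$'s yields the sum, which carries $L$ symbols of entropy. Since $Y$ from colluders and from $u$ are functions of known keys and contribute nothing, the $U-T-1$ remaining $Y$'s must supply $L$ symbols in total, so $(U-T-1)L_Y\ge L$.

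\textbf{Infeasibility when $U\le T+1$.} Take $|\mathcal{U}_1|=U$. Then $u$ together with colluders can cover all but one user $k\in\mathcal{U}_1$. Knowing the sum and all other inputs reveals $W_k$, which is allowed by the security constraint. The contradiction must instead come from the first round: $X_k$ is sent before $\mathcal{U}_1$ is known, so I would pick two survivor sets, one containing $k$ and one not, and combine correctness in one with security in the other to force leakage of $W_k$.

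The main obstacle I expect is making the key construction genuinely $(T+1)$-private while keeping $R_2=\frac{1}{U-T-1}$ for every $\mathcal{U}_1$ simultaneously. The $R_2$ converse is also delicate, since it requires carefully justifying that the first-round messages carry no useful key-sum information for the colluding set.
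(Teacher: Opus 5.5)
Your $R_1$ converse is fine and coincides with the paper's: let a user of $\mathcal{U}_1$ drop in round two and condition on everyone else's $(W_k,Z_k)$. The achievability, however, fails as written. You mask with $S_k=\sum_{j\le U}h_{k,j}N_j$, i.e., $K$ masks generated from only $U$ random vectors. The security constraint exposes the first-round messages of all $K$ users. Since $K>U$, the non-coalition masks are linearly dependent modulo what $\mathcal{T}\cup\{u\}$ knows: there are $K-T-1$ of them in a space of dimension $U-T-1$ per coordinate. So the $X_k$'s reveal a nonzero combination $\sum_{k\notin\mathcal{T}\cup\{u\}}c_kW_k$. Security would require this to be determined by $\sum_{k\in\mathcal{U}_1}W_k$ and the coalition's inputs for every admissible $\mathcal{U}_1$. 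That is impossible for both $\mathcal{U}_1=[K]$ and $\mathcal{U}_1=[K]\setminus\{j\}$ with $j\notin\mathcal{T}\cup\{u\}$. Hence the first-round masks must be independent across users, and each user's own mask must be secret-shared among all users.

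The padding must also have $T+1$ uniform symbols, not $T$. The coalition has $T+1$ members, and with $T$ padding symbols their $T+1$ shares of a non-colluder's mask leak a symbol of it. (Your count $U-T-1+T$ also falls one short of the $U$ coefficients of a degree-$(U-1)$ code.) The paper's construction also removes the obstacle you anticipate. It takes independent $N_i\in\mathbb{F}_q^{U-T-1}$ and $S_i\in\mathbb{F}_q^{T+1}$ for every $i\in[K]$ and encodes $(N_i,S_i)$ with a $U\times K$ MDS matrix whose last $T+1$ rows are also MDS. User $k$ gets the key $N_k$ and the $k$-th share $[\boldsymbol{Q}_i]_k$ of every user $i$, and the messages are $X_k=W_k+N_k$ and $Y_k^{\mathcal{U}_1}=\sum_{i\in\mathcal{U}_1}[\boldsymbol{Q}_i]_k$. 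By linearity the same shares serve every $\mathcal{U}_1$, and any $U$ of these symbols (the decoder's own included) return $\sum_{i\in\mathcal{U}_1}(N_i,S_i)$.

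Both security-based converses are missing their key step. For $R_2$ you need $I(\sum_{k\in\mathcal{U}_1}W_k;\{X_k\}_{k\in\mathcal{U}_1\setminus\{u\}},\{Y_k^{\mathcal{U}_1}\}_{k\in\mathcal{T}},W_u,Z_u)=0$. Security at the same $\mathcal{U}_1$ cannot give this, because it conditions on $\sum_{k\in\mathcal{U}_1}W_k$ and says nothing about it. The missing idea is to invoke security at a different survivor set, e.g. $\mathcal{U}_1'=\mathcal{U}_1\cup\{j\}$, which exists since $U\le K-1$. Three facts then combine: the $X_k$ do not depend on the survivor set, the colluders' $Y_k^{\mathcal{U}_1}$ are functions of $\{W_k,Z_k\}_{k\in\mathcal{T}}$, and $\sum_{k\in\mathcal{U}_1}W_k$ is independent of $(\sum_{k\in\mathcal{U}_1'}W_k,\{W_k,Z_k\}_{k\in\mathcal{T}\cup\{u\}})$ by Lemma~\ref{lemma:indc}. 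The chain rule gives the required zero, and the rest of your argument yields $L\le(U-T-1)L_Y$.

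For infeasibility, your two-survivor-set idea is the right one, but the configuration you start from gives nothing. With $|\mathcal{U}_1|=U$, necessarily $\mathcal{U}_2=\mathcal{U}_1\ni k$, so the decoder uses $Y_k^{\mathcal{U}_1}$, which security at another survivor set does not control. The essential ingredient is a second-round dropout. Choose the coalition $\mathcal{T}\cup\{u\}$ of size exactly $U$ (allowed because $U\le T+1$), and let $\mathcal{U}_2=\mathcal{T}\cup\{u\}$ and $\mathcal{U}_1=\mathcal{U}_2\cup\{k\}$. Every second-round message $u$ receives is then a function of coalition information, so correctness forces $H(W_k\mid X_k,\{W_i,Z_i\}_{i\in\mathcal{U}_2})=0$. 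Security at $\mathcal{U}_1'=\mathcal{U}_2$ gives $I(W_k;\{X_i\}_{i\neq u}\mid\{W_i,Z_i\}_{i\in\mathcal{U}_2})=0$, contradicting $H(W_k)=L>0$. The paper instead applies security at another strict superset $[U_1]\neq\mathcal{U}_1$ of $\mathcal{U}_2$, but the mechanism is the same. This is exactly where $U\le T+1$ enters.
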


From Theorem~\ref{thm:main} and its proof (see Section IV for achievability and Section V for converse), we provide the following interpretations:

\tit{1) Infeasibility:} \Thm \ref{thm:main} reveals a  feasibility threshold. When $U \leq T+1$, secure aggregation in the information-theoretic sense is infeasible. Intuitively, in this regime, the number of potentially colluding users is too large relative to the minimum number of users required for correct decoding. As a result, any scheme that guarantees correctness inevitably exposes additional information beyond the desired sum, making it impossible to simultaneously satisfy the correctness and security requirements.

\tit{2) Optimal Rates:} When $U > T+1$, secure aggregation becomes feasible and admits a particularly structured optimal solution. The optimal communication strategy takes a simple form: each user sends one symbol in the first round, and the second round requires $\frac{1}{U-T-1}$ symbols per input symbol. The first-round rate must be at least one because a user may drop out in the second round, while its input must still be recoverable as part of the desired aggregation. Consequently, sufficient information about each user's input must already be conveyed in the first-round transmission. This shows that the communication cost is determined by the effective redundancy level $U-T-1$, which measures the remaining degree of cooperation after accounting for worst-case collusion. A larger value of $U-T-1$ implies more available redundancy among active users, and therefore directly reduces the communication overhead in the second round.

Finally, we note that the optimal communication rate does not depend on the number of users $K$. This indicates that the scalability of the system does not affect the fundamental communication efficiency; instead, feasibility and efficiency are determined solely by the relative values of $U$ and $T$.

\section{Achievability Proof of Theorem \ref{thm:main}}\label{sec:ach}

We now present the achievability proof of Theorem~\ref{thm:main}. The proof is first illustrated through two representative examples, which highlight the main ingredients of the construction in a transparent setting. Based on these examples, we then generalize the scheme to arbitrary system parameters, followed by the analysis of correctness, security, and achievable rates.

\subsection{Example~1: $K = 4, U = 3, T = 0$}

We illustrate the main ideas using a simple instance of the DSA problem with $K = 4$ users, where at least $U = 3$ users survive and there are no colluding users, i.e., $T=0$. We choose the input length as $L = U - T -1 = 2$. Accordingly, each user input is given by
$W_{k} = \big(W_{k}(1), W_{k}(2)\big)^{\top} \in \mathbb{F}_q^{2 \times 1}$.
Next, we define the correlated keys. Each user $k\in [4]$ has two independent and uniformly distributed random vectors over $\mathbb{F}_q$: 
$N_{k} = \bigl(N_{k}(1), N_{k}(2)\bigr)^{\top} \in \mathbb{F}_q^{2 \times 1}$,
and
$S_{k}  \in \mathbb{F}_q$.
Based on these local random variables, we construct additional shared keys through linear combinations. Specifically, for each user $k$,  the secret keys  are given by
\begin{eqnarray}
Z_{k} = \Big( N_{k}, \big\{ (N_{i}(1), N_{i}(2), S_{i}) \cdot \boldsymbol{\alpha}_{k} \big\}_{i\in[4]} \Big). \label{ex1keys}
\end{eqnarray}

Before specifying the coefficient vectors $\boldsymbol{\alpha}_{k}$, we introduce the maximum distance separable (MDS) matrix used in the construction.
A matrix $\boldsymbol{\alpha} \in \mathbb{F}_q^{3 \times 4}$ is called an MDS matrix if any $3 \times 3$ submatrix is invertible over $\mathbb{F}_q$, i.e., any three columns are linearly independent. This property ensures that, given any three linear combinations corresponding to a subset of users of size three, each user can uniquely recover the underlying keys.

We construct $\boldsymbol{\alpha}$ as a Vandermonde matrix over $\mathbb{F}_q$,
\[
\boldsymbol{\alpha} = 
\begin{pmatrix}
1 & \beta_1 & \beta_1^2 & \beta_1^3 \\
1 & \beta_2 & \beta_2^2 & \beta_2^3  \\
1 & \beta_3 & \beta_3^2 & \beta_3^3  
\end{pmatrix},
\]
where $\{\beta_1, \beta_2, \beta_3\} \subset \mathbb{F}_q$ are distinct nonzero elements.

Then, for each user $k$, the coefficient vector $\boldsymbol{\alpha}_{k} \in \mathbb{F}_q^{3 \times 1}$ is taken as the $k$-th column of $\boldsymbol{\alpha}$. The inner product of a 3-dimensional row vector with $\boldsymbol{\alpha}_{k}$ produces a scalar used in the linear combination for each $k$.

\textbf{Explicit instantiation:} We can choose
$\boldsymbol{\alpha}_{k} = \bigl(1, 2^{k-1}, 3^{k-1}\bigr)^\top,$
provided that the multiplicative orders of $2,3,4$ in $\mathbb{F}_q$ are all greater than $k$, so that the powers remain distinct. In particular, $q$ must be an odd prime or odd prime power, i.e., $q \neq 2^m$, to avoid degeneracy in characteristic 2.
This construction ensures that each user has access to both private and globally mixed linear combinations of the keys, which is essential for achieving the desired security and recoverability properties.
For example, for each user $k\in [4]$, the keys are
\begin{align}
Z_1=\Big(N_1,  &N_{1}(1)+ N_{1}(2)+S_{1},\notag\\
    &N_{2}(1)+ N_{2}(2)+S_{2},\notag\\
    &N_{3}(1)+ N_{3}(2)+S_{3},\notag\\
    &N_{4}(1)+ N_{4}(2)+S_{4}\Big),\notag\\
    Z_2=\Big(N_2,  &N_{1}(1)+ 2N_{1}(2)+3S_{1},\notag\\
    &N_{2}(1)+ 2N_{2}(2)+3S_{2},\notag\\
    &N_{3}(1)+ 2N_{3}(2)+3S_{3},\notag\\
    &N_{4}(1)+ 2N_{4}(2)+3S_{4}\Big),\notag\\
    Z_3=\Big(N_3,  &N_{1}(1)+ 2^2N_{1}(2)+3^2S_{1},\notag\\
    &N_{2}(1)+ 2^2N_{2}(2)+3^2S_{2},\notag\\
    &N_{3}(1)+ 2^2N_{3}(2)+3^2S_{3},\notag\\
    &N_{4}(1)+ 2^2N_{4}(2)+3^2S_{4}\Big),\notag\\
    Z_4=\Big(N_4,  &N_{1}(1)+ 2^3N_{1}(2)+3^3S_{1},\notag\\
    &N_{2}(1)+ 2^3N_{2}(2)+3S_{2},\notag\\
    &N_{3}(1)+ 2^3N_{3}(2)+3^3S_{3},\notag\\
    &N_{4}(1)+ 2^3N_{4}(2)+3^3S_{4}\Big).
\end{align}

We have now completed the design of the correlated secret keys. We next describe how the messages are constructed over the two communication rounds.

\textbf{First round:}
Each user broadcasts the message $X_k$
\begin{align}
X_{k} = W_{k} + N_{k},
\quad \forall k\in[4]. \label{ex1X1}
\end{align}

After the first round, based on the received messages, each user can determine the set of surviving users $\mathcal{U}_1$. Then, each user proceeds to the second-round transmission accordingly.

\textbf{Second round:}
For each surviving user $k\in\mathcal{U}_1$, we set
\begin{align}
Y^{\mathcal{U}_1}_{k}
=& \sum_{i\in\mathcal{U}_1}
\Big( (N_{i}(1), N_{i}(2), S_{i}) \cdot \boldsymbol{\alpha}_{k} 
\Big)\notag\\
=& \sum_{i\in\mathcal{U}_1}
\Big(
N_{i}(1)
+2^{k-1}N_{i}(2)
+3^{k-1}S_{i}
\Big).
\end{align}

For example, if $\mathcal{U}_1=\{1,2,4\}$, we have
\begin{align}
    Y^{\mathcal{U}_1}_1=&N_{1}(1)+N_{1}(2)+S_{1}+\notag\\
    &N_{2}(1)+N_{2}(2)+S_{2}+\notag\\
    &N_{4}(1)+N_{4}(2)+S_{4},\notag\\
    Y^{\mathcal{U}_1}_2=&N_{1}(1)+2N_{1}(2)+3S_{1}+\notag\\
    &N_{2}(1)+2N_{2}(2)+3S_{2}+\notag\\
    &N_{4}(1)+2N_{4}(2)+3S_{4},\notag\\
    Y^{\mathcal{U}_1}_4=&N_{1}(1)+2^3N_{1}(2)+3^3S_{1}+\notag\\
    &N_{2}(1)+2^3N_{2}(2)+3^3S_{2}+\notag\\
    &N_{4}(1)+2^3N_{4}(2)+3^3S_{4}.
\end{align}

We next analyze the achievable rates and prove correctness and security.

\emph{Rate:}
Finally, we specify the communication rates of the proposed scheme.
The first-round rate is given by
$R_X=\frac{L_X}{L}=1.$
For the second round, we have
$R_Y=\frac{L_Y}{L}=\frac{1}{2}.$
Equivalently, the second-round rate can be expressed in terms of the system
parameters as
$R_Y=\frac{1}{U-T-1}.$
With these rates, the proposed scheme satisfies both the correctness and the
security requirements, as shown below.

\emph{Correctness:}
Let $\mathcal{U}_1=\{1,2,4\}$ denote the set of surviving users after the first round, and let $\mathcal{U}_2=\{1,2,4\}$ denote the set of surviving users in the second round. Since $U=3$, it follows that $|\mathcal{U}_1|\ge U$ and $|\mathcal{U}_2|\ge U$, ensuring that at least $U$ users survive in both rounds.
From the second-round messages $\{Y^{\mathcal{U}_1}_k\}_{k\in \mathcal{U}_2}$, each user can recover the aggregation keys
$\sum_{k\in \mathcal{U}_1} N_{k} =$ $ \bigl(\sum_{k\in \mathcal{U}_1} N_{k}(1),$ $ \sum_{k\in \mathcal{U}_1} N_{k}(2)\bigr)$ and $\sum_{k\in \mathcal{U}_1} S_{k}$
with zero error, since the precoding matrices in (\ref{ex1keys}) are MDS.
Combining this with the sum of the first-round messages,
$X_1 + X_2 + X_4 = \sum_{k\in \mathcal{U}_1} \bigl(W_{k} + N_{k}\bigr),$
each user can cancel $\sum_{k\in \mathcal{U}_1} N_k$ and successfully decode the desired sum
$\sum_{k\in \mathcal{U}_1} W_{k}$.

{\em Security:} 
When $\mathcal{U}_1=\mathcal{U}_2=\{1,2,4\}$,
we verify that the security constraint~\eqref{security} is satisfied.
Consider user~1, we have
\begin{align}
    &I\Big(X_{2},X_{3},X_{4},Y^{\mathcal{U}_1}_2,Y^{\mathcal{U}_1}_4;\{W_{k}\}_{k\in [4]}\Big|\notag\\
    &W_{1}+W_{2}+W_{4},W_{1},Z_{1}\Big)\\
    =&H\Big(X_{2},X_{3},X_{4},Y^{\mathcal{U}_1}_2,Y^{\mathcal{U}_1}_4\Big|W_{1}+W_{2}+W_{4},W_{1},Z_{1}\Big)\notag\\
    &-H\Big(X_{2},X_{3},X_{4},Y^{\mathcal{U}_1}_2,Y^{\mathcal{U}_1}_4\Big|\{W_{k}\}_{k\in [4]},Z_{1}\Big)\\
    =&H\Big(X_{2},X_{3},X_{4},Y^{\mathcal{U}_1}_2,Y^{\mathcal{U}_1}_4,W_{1}+W_{2}+W_{4},W_{1},Z_{1}\Big)\notag\\
    &-H\Big(W_{1}+W_{2}+W_{4},W_{1},Z_{1}\Big)\notag\\
    &-H\Big(N_{2},N_{3},N_{4},Y^{\mathcal{U}_1}_2,Y^{\mathcal{U}_1}_4\Big|\{W_{k}\}_{k\in [4]},Z_{1}\Big)\label{secpfex1t1}\\
    =&H\Big(X_{2},X_{3},X_{4},Y^{\mathcal{U}_1}_2,Y^{\mathcal{U}_1}_4,W_{1},Z_{1}\Big)\notag\\
    &-H\Big(W_{1}+W_{2}+W_{4},W_{1},Z_{1}\Big)\notag\\
    &-\Big(H(N_{2},N_{3},N_{4},Y^{\mathcal{U}_1}_2,Y^{\mathcal{U}_1}_4,Z_{1})-H(Z_1)\Big)\label{secpfex1t11}\\
    =&H\Big(X_{2},X_{3},X_{4},Y^{\mathcal{U}_1}_2,Y^{\mathcal{U}_1}_4,W_{1},Z_{1}\Big)\notag\\
    &-H(W_{1}+W_{2}+W_{4},W_{1},Z_{1})\notag\\
    &-(H(N_1,N_2,N_3,N_4,S_1,S_2,S_3,S_4)-H(Z_1))\label{secpfex1t2}\\
    =&16-10-(12-6)=0.
\end{align}
The equality in \eqref{secpfex1t1} follows from substituting $X_k=W_k+Z_k$ in (\ref{ex1X1}), which removes the dependence on $\{W_k\}_{k\in[4]}$ and reduces the random variables to $(N_2,N_3,N_4,Y^{\mathcal{U}_1}_2,Y^{\mathcal{U}_1}_4)$.
In \eqref{secpfex1t11}, the first term follows since $W_{1}+W_{2}+W_{4}$ can be expressed as a deterministic function of $X_{2},X_{4},Y^{\mathcal{U}_1}_2,Y^{\mathcal{U}_1}_4,W_{1},Z_{1}$.
The equality in \eqref{secpfex1t2} follows from the structure of the key design in (\ref{ex1keys}). In particular, given $(N_2,N_3,N_4,Z_1)$, the full keys $(N_1,N_2,N_3,N_4,S_1,S_2,S_3,S_4)$ can be reconstructed.
Finally, substituting the entropy values yields zero mutual information, which implies that no additional information about $\{W_k\}_{k\in[4]}$ is leaked. Hence, the security constraint is satisfied.

\subsection{Example~2: $K = 4, U = 3, T = 1$}

We consider the DSA problem with $K = 4$ users, where at least $U = 3$ users survive and there is $T=1$ colluding user. We choose the input length as $L = U - T -1 = 1$.
Accordingly, each user input is given by $W_{k} \in \mathbb{F}_q$.
Next, we define the correlated keys. Each user $k\in [4]$ has two independent and uniformly distributed random variables over $\mathbb{F}_q$: $N_{k} \in \mathbb{F}_q$,
and
$S_{k}  = \bigl(S_{k}(1), S_{k}(2)\bigr)^{\top} \in \mathbb{F}_q^{2 \times 1}$.
Based on these local random variables, we construct $K = 4$ linearly independent linear combinations, which are used to encode additional keys accessible to each user. Specifically, for each user $k$, the available keys is defined as
\begin{eqnarray}
Z_{k} = \Big( N_{k}, \big\{ (N_{i}, S_{i}(1), S_{i}(2)) \cdot \boldsymbol{\alpha}_{k} \big\}_{i\in[4]} \Big). \label{ex2keys}
\end{eqnarray}

Before specifying the coefficient vectors $\boldsymbol{\alpha}_{k}$, we define the MDS matrix used in the construction.

A matrix $\boldsymbol{\alpha} \in \mathbb{F}_q^{3 \times 4}$ is called an MDS matrix if any $3 \times 3$ submatrix is invertible over $\mathbb{F}_q$, i.e., any three columns are linearly independent. This guarantees that any three linear combinations determine the remaining keys components.

We construct $\boldsymbol{\alpha}$ as a Vandermonde matrix over $\mathbb{F}_q$,
\[
\boldsymbol{\alpha} = 
\begin{pmatrix}
1 & \beta_1 & \beta_1^2 & \beta_1^3 \\
1 & \beta_2 & \beta_2^2 & \beta_2^3  \\
1 & \beta_3 & \beta_3^2 & \beta_3^3  
\end{pmatrix},
\]
where $\{\beta_1, \beta_2, \beta_3\} \subset \mathbb{F}_q$ are distinct nonzero elements.

Then, for each user $k$, the coefficient vector $\boldsymbol{\alpha}_{k} \in \mathbb{F}_q^{4 \times 1}$ is taken as the $k$-th column of $\boldsymbol{\alpha}$.

\textbf{Explicit instantiation:} We can choose
\[
\boldsymbol{\alpha}_{k} = \bigl(1, 2^{k-1}, 3^{k-1}\bigr)^\top,
\]
provided that the multiplicative orders of $2,3,4$ in $\mathbb{F}_q$ are sufficiently large so that the powers remain distinct. In particular, $q$ is chosen as an odd prime or odd prime power, i.e., $q \neq 2^m$, to avoid degeneracy in characteristic 2.
This construction ensures that each user has access to both private keys and globally mixed linear combinations, which are essential for security and recoverability.
For example, the keys of the users are 
\begin{align}
    Z_1=\Big(N_1,  &N_{1}+ S_{1}(1)+S_{1}(2),\notag\\
    &N_{2}+ S_{2}(1)+S_{2}(2),\notag\\
    &N_{3}+ S_{3}(1)+S_{3}(2),\notag\\
    &N_{4}+ S_{4}(1)+S_{4}(2)\Big),\notag\\
    Z_2=\Big(N_2,  &N_{1}+ 2S_{1}(1)+3S_{1}(2),\notag\\
    &N_{2}+ 2S_{2}(1)+3S_{2}(2),\notag\\
    &N_{3}+ 2S_{3}(1)+3S_{3}(2),\notag\\
    &N_{4}+ 2S_{4}(1)+3S_{4}(2)\Big),\notag\\
    Z_3=\Big(N_3,  &N_{1}+ 2^2S_{1}(1)+3^2S_{1}(2),\notag\\
    &N_{2}+ 2^2S_{2}(1)+3^2S_{2}(2),\notag\\
    &N_{3}+ 2^2S_{3}(1)+3^2S_{3}(2),\notag\\
    &N_{4}+ 2^2S_{4}(1)+3^2S_{4}(2)\Big),\notag\\
    Z_4=\Big(N_4,  &N_{1}+ 2^3S_{1}(1)+3^3S_{1}(2),\notag\\
    &N_{2}+ 2^3S_{2}(1)+3^3S_{2}(2),\notag\\
    &N_{3}+ 2^3S_{3}(1)+3^3S_{3}(2),\notag\\
    &N_{4}+ 2^3S_{4}(1)+3^3S_{4}(2)\Big).
\end{align}

We now describe the communication scheme.

\textbf{First round:}
Each user broadcasts the message
\begin{align}
X_{k} = W_{k} + N_{k},
\quad \forall k\in[4]. \label{ex2X1}
\end{align}
After the first round, each user can determine the surviving set $\mathcal{U}_1$. Based on $\mathcal{U}_1$, users proceed to the second round.

\textbf{Second round:}
For each surviving user $k\in\mathcal{U}_1$, we set
\begin{align}
Y^{\mathcal{U}_1}_{k}
=& \sum_{i\in\mathcal{U}_1}
\Big( (N_{i}, S_{i}(1), S_{i}(2)) \cdot \boldsymbol{\alpha}_{k} \Big)\notag\\
=& \sum_{i\in\mathcal{U}_1}
\Big(
N_{i}
+2^{k-1}S_{i}(1)
+3^{k-1}S_{i}(2)
\Big).
\end{align}
For example, if $\mathcal{U}_1=\{1,2,4\}$, we have
\begin{align}
    Y^{\mathcal{U}_1}_1=&N_{1}+S_{1}(1)+S_{1}(2)+\notag\\
    &N_{2}+S_{2}(1)+S_{2}(2)+\notag\\
    &N_{4}+S_{4}(1)+S_{4}(2),\notag\\
    Y^{\mathcal{U}_1}_2=&N_{1}+2S_{1}(1)+3S_{1}(2)+\notag\\
    &N_{2}+2S_{2}(1)+3S_{2}(2)+\notag\\
    &N_{4}+2S_{4}(1)+3S_{4}(2),\notag\\
    Y^{\mathcal{U}_1}_4=&N_{1}+2^3S_{1}(1)+3^3S_{1}(2)+\notag\\
    &N_{2}+2^3S_{2}(1)+3^3S_{2}(2)+\notag\\
    &N_{4}+2^3S_{4}(1)+3^3S_{4}(2).
\end{align}

\emph{Rate:}
The first-round rate is
$R_X=1.$
The second-round rate is
$R_Y=1,$
which is equivalently expressed as
$R_Y=\frac{1}{U-T-1}.$
These rates satisfy the requirements of the scheme.

We next prove correctness and security of the above \schm.

\emph{Correctness:}
Let $\mathcal{U}_1=\{1,2,4\}$ and $\mathcal{U}_2=\{1,2,4\}$. Since $U=3$, both sets satisfy $|\mathcal{U}_1|,|\mathcal{U}_2|\ge U$.
From $\{Y^{\mathcal{U}_1}_k\}_{k\in \mathcal{U}_2}$, each user can recover
$\sum_{k\in \mathcal{U}_1} N_{k}$ and $\sum_{k\in \mathcal{U}_1} S_{k}=(\sum_{k\in \mathcal{U}_1} S_{k}(1),\sum_{k\in \mathcal{U}_1} S_{k}(2))$,
due to the MDS property of $\boldsymbol{\alpha}$.
Combining this with
$X_1 + X_2 + X_4 = \sum_{k\in \mathcal{U}_1} (W_k + N_k),$
each user can cancel $\sum_{k\in \mathcal{U}_1} N_k$ and recover $\sum_{k\in \mathcal{U}_1} W_k$.

{\em Security:} 
When $\mathcal{U}_1=\mathcal{U}_2=\{1,2,4\}$,
we show that the security constraint~\eqref{security} is satisfied.
Consider that user~1 colludes with user~3, we have
\begin{align}
    &I\Big(X_{2},X_{3},X_{4},Y^{\mathcal{U}_1}_2,Y^{\mathcal{U}_1}_4;\{W_{k}\}_{k\in [4]}\Big|\notag\\
    &W_{1}+W_{2}+W_{4},W_{1},Z_{1},W_{3},Z_{3}\Big)\\
    =&H\Big(X_{2},X_{3},X_{4},Y^{\mathcal{U}_1}_2,Y^{\mathcal{U}_1}_4\Big|W_{1}+W_{2}+W_{4},W_{1},Z_{1},W_{3},\notag\\
    &Z_{3}\Big)-H\Big(X_{2},X_{3},X_{4},Y^{\mathcal{U}_1}_2,Y^{\mathcal{U}_1}_4|\{W_{k}\}_{k\in [4]},Z_{1},Z_{3}\Big)\\
    =&H\Big(X_{2},X_{3},X_{4},Y^{\mathcal{U}_1}_2,Y^{\mathcal{U}_1}_4,W_{1}+W_{2}+W_{4},W_{1},Z_{1},W_{3},\notag\\
    &Z_{3}\Big)-H(W_{1}+W_{2}+W_{4},W_{1},Z_{1},W_{3},Z_{3})\notag\\
    &-H\Big(N_{2},N_{3},N_{4},Y^{\mathcal{U}_1}_2,Y^{\mathcal{U}_1}_4\Big|\{W_{k}\}_{k\in [4]},Z_{1},Z_3\Big)\label{secpfex2t1}\\
    =&H\Big(X_{2},X_{4},Y^{\mathcal{U}_1}_2,W_{1},Z_{1},W_3,Z_3\Big)\notag\\
    &-H(W_{1}+W_{2}+W_{4},W_{1},Z_{1},W_3,Z_3)\notag\\
    &-\Big(H\Big(N_{2},N_{4},Y^{\mathcal{U}_1}_2,Y^{\mathcal{U}_1}_4,Z_{1},Z_3\Big)-H(Z_1,Z_3)\Big)\label{secpfex2t11}\\
    =&H\Big(X_{2},X_{4},Y^{\mathcal{U}_1}_2,W_{1},Z_{1},W_{3},Z_{3}\Big)\notag\\
    &-H(W_{1}+W_{2}+W_{4},W_{1},Z_{1},W_{3},Z_{3})\notag\\
    &-(H(N_1,N_2,N_3,N_4,S_1,S_2,S_3,S_4)-H(Z_1,Z_3))\label{secpfex2t2}\\
    =&15-13-(12-10)=0.
\end{align}
The equality in \eqref{secpfex2t1} follows by substituting $X_k=W_k+N_k$ as defined in~\eqref{ex2X1}, which removes the dependence on $\{W_k\}_{k\in[4]}$ and reduces the remaining random variables to $(N_2,N_3,N_4,Y^{\mathcal{U}_1}_2,Y^{\mathcal{U}_1}_4)$.
The equality in \eqref{secpfex2t11} holds since $W_{1}+W_{2}+W_{4}$ can be determined from $(X_{2},X_{4},Y^{\mathcal{U}_1}_2,Y^{\mathcal{U}_1}_4,W_{1},Z_{1})$. Moreover, $Y^{\mathcal{U}_1}_4$ is functionally related to $(Y^{\mathcal{U}_1}_2, Z_1, Z_3)$ due to the MDS property of $\boldsymbol{\alpha} \in \mathbb{F}_q^{3 \times 4}$, where any three columns determine the remaining ones.
The equality in \eqref{secpfex2t2} follows from the structure of the key generation scheme in~\eqref{ex2keys}. In particular, given $(N_2,N_4,Z_1,Z_3)$, the full keys tuple $(N_1,N_2,N_3,N_4,S_1,S_2,S_3,S_4)$ can be reconstructed.
Finally, evaluating the entropy terms yields zero mutual information, implying that no additional information about $\{W_k\}_{k\in[4]}$ is leaked. Hence, the security constraint is satisfied.

We now extend the above example to a general scheme applicable to arbitrary system parameters, while satisfying the correctness and security constraints.

\subsection{General Scheme for Arbitrary $K, U, T$}

We next present a general achievable scheme for the feasible
regime $U>T+1$. Before describing the construction, we first
briefly explain why the condition $U>T+1$ is 
necessary.

Intuitively, when the number of surviving users is not
strictly larger than the collusion level plus one, the users
participating in decoding do not provide sufficient redundancy
to simultaneously guarantee correctness and information-theoretic
security. In particular, correctness requires that the surviving
users collectively recover the desired aggregation from the
broadcast messages, while security requires that any user
colluding with at most $T$ other users should learn nothing
beyond the target sum. However, when $U \le T+1$, the set
of users involved in decoding can already be covered by the
colluding users together with the observing user. Consequently,
the information required for correct decoding inevitably leaks
additional information about individual users' inputs, making the correctness and security constraints 
incompatible.

This intuition will be rigorously formalized in the converse proof of Section~\ref{inf}. Therefore, in the remainder of
this section, we focus on the feasible regime $U>T+1$ and
present an achievable scheme attaining the optimal rates.

Consider the DSA problem with
$K$ users, where at least $U$ users survive. Moreover, each
user $k \in [K]$ may collude with at most $T$ other users.
Accordingly, we choose the input length as
$
L = U - T - 1.
$
Each user $k$ holds an input vector
$
W_{k} = \bigl(W_{k}(1), \cdots, W_{k}(U - T - 1)\bigr)^\top
\in \mathbb{F}_q^{(U - T - 1)\times 1}.
$

We next specify the correlated secret keys used in the scheme.
Consider a total of $2K$ independent and uniformly distributed random vectors over $\mathbb{F}_q$, given by
$N_{k} = \bigl(N_{k}(1), \cdots, N_{k}(U - T -1)\bigr)^\top \in \mathbb{F}_q^{(U - T - 1)\times 1},$ and $S_{k} = \bigl(S_{k}(1), \cdots, S_{k}(T+1)\bigr)^\top \in \mathbb{F}_q^{(T+1)\times 1},$ for all $k\in[K]$.
From these secret keys, we construct a collection of linearly independent combinations over $\mathbb{F}_q$, with coefficients determined by a carefully designed MDS matrix. Specifically, for each $k\in[K]$, the keys available to user $k$ is defined as
\begin{equation}
Z_{k} =\left(N_{k},\bigl\{[\boldsymbol{Q}_{i}]_{k}\bigr\}_{i\in[K]}\right).\label{eq:zz}
\end{equation}
Here,
$[\boldsymbol{Q}_{i}]_{k}\triangleq \Bigl(N_{i}(1), \cdots, N_{i}(U - T - 1),S_{i}(1), \cdots, S_{i}(T+1)\Bigr)\boldsymbol{\alpha}_{k}\in \mathbb{F}_q,
$ where $\boldsymbol{\alpha}_{k}\in\mathbb{F}_q^{U\times 1}$ denotes the $k$-th column of a matrix
$\boldsymbol{\alpha}\in\mathbb{F}_q^{U\times K}$.
We say that a matrix $\boldsymbol{\alpha}\in\mathbb{F}_q^{U\times K}$ with $U < K$ is an MDS matrix if any
$U\times U$ submatrix is nonsingular. Furthermore, $\boldsymbol{\alpha}$ is said to be a $(T+1)$-private MDS matrix if the submatrix formed by its last $T+1$ rows is also MDS. Such a matrix exists for sufficiently large field size $q$.
A $(T+1)$-private MDS matrix~\cite{so2022lightsecagg} guarantees that, for any subset
$\mathcal{T}\cup\{u\}\subseteq[K]$ with $|\mathcal{T}\cup\{u\}|\le T+1$, the collection of linear projections
$\{[\boldsymbol{Q}_{i}]_{k}\}_{i\in[K],\,k\in\mathcal{T}\cup\{u\}}$
is statistically independent of the masking variables
$\{N_{i}\}_{i\in[K]}$, i.e.,
\begin{equation}
I\left(
\{[\boldsymbol{Q}_{i}]_{k}\}_{i\in[K],\,k\in(\mathcal{T}\cup\{u\})};
\{N_{i}\}_{i\in[K]}
\right)=0.
\label{Tprivacy}
\end{equation}
This follows from the fact that the submatrix formed by the last $T+1$ rows of
$\boldsymbol{\alpha}$ is MDS, which ensures that any set of at most $T+1$ such projections depends only on the keys vectors $\{S_{i}\}$ and reveals no information about $\{N_{i}\}$.

To facilitate the subsequent security analysis, we next summarize several useful entropy properties of the secret keys in the following lemma.

\begin{lemma}
For any subset $\mathcal{T}\cup\{u\}\subseteq[K]$ with $|\mathcal{T}\cup\{u\}|\le T+1$,
any $\mathcal{U}_2\subseteq \mathcal{U}_1 \subseteq [K]$
satisfying $|\mathcal{U}_2|\ge U$, with $U>T+1$, we have
\begin{equation}
I\left( \{N_{i}\}_{i\in[K]\setminus(\mathcal{T}\cup\{u\})}; \{N_{k},\{[\boldsymbol{Q}_{i}]_{k}\}_{i\in[K]}\}_{k\in(\mathcal{T}\cup\{u\})} \right)=0.
\label{Tprivacy1}
\end{equation}
\end{lemma}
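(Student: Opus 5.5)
The plan is to exploit the product structure of the keys across the index $i$ and reduce the statement to a one-time-pad argument for each non-colluding index. Write $\mathcal{A}\triangleq\mathcal{T}\cup\{u\}$, so $|\mathcal{A}|\le T+1$. Group the observed variables by the index $i$ of the underlying key pair $(N_i,S_i)$:
\[
O_i \triangleq \bigl\{[\boldsymbol{Q}_{i}]_{k}\bigr\}_{k\in\mathcal{A}} \ \text{for } i\notin\mathcal{A},\qquad
O_i \triangleq \bigl(N_i,\{[\boldsymbol{Q}_{i}]_{k}\}_{k\in\mathcal{A}}\bigr) \ \text{for } i\in\mathcal{A}.
\]
The right-hand argument of \eqref{Tprivacy1} is then exactly $\{O_i\}_{i\in[K]}$, and each $O_i$ is a deterministic linear function of $(N_i,S_i)$ alone. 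The pairs $\{(N_i,S_i)\}_{i\in[K]}$ are mutually independent, because all $2K$ vectors are independent and uniform.

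\textbf{Step 1 (remove the colluders' own indices).} By the chain rule,
\[
I\bigl(\{N_i\}_{i\notin\mathcal{A}};\{O_i\}_{i\in[K]}\bigr)
= I\bigl(\{N_i\}_{i\notin\mathcal{A}};\{O_i\}_{i\notin\mathcal{A}}\bigr)
+ I\bigl(\{N_i\}_{i\notin\mathcal{A}};\{O_i\}_{i\in\mathcal{A}}\,\big|\,\{O_i\}_{i\notin\mathcal{A}}\bigr).
\]
The second term is zero. Indeed, $\{O_i\}_{i\in\mathcal{A}}$ is a function of $\{(N_i,S_i)\}_{i\in\mathcal{A}}$, which is independent of $\{(N_i,S_i)\}_{i\notin\mathcal{A}}$. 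The latter determines both $\{N_i\}_{i\notin\mathcal{A}}$ and $\{O_i\}_{i\notin\mathcal{A}}$.

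\textbf{Step 2 (factorize over non-colluding indices).} The pairs $(N_i,O_i)$, $i\notin\mathcal{A}$, are mutually independent. Hence the first term equals $\sum_{i\notin\mathcal{A}} I(N_i;O_i)$, and it suffices to show $I(N_i;O_i)=0$ for each $i\notin\mathcal{A}$.

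\textbf{Step 3 (per-index secrecy via the $(T+1)$-private MDS property).} This is the main step. Split $\boldsymbol{\alpha}$ into its top $U-T-1$ rows $\boldsymbol{\alpha}^{\rm top}$ and its bottom $T+1$ rows $\boldsymbol{\alpha}^{\rm bot}$. Let $\boldsymbol{\alpha}^{\rm top}_{\mathcal{A}}$ and $\boldsymbol{\alpha}^{\rm bot}_{\mathcal{A}}$ denote their restrictions to the columns in $\mathcal{A}$. Then
\[
O_i = N_i^{\top}\boldsymbol{\alpha}^{\rm top}_{\mathcal{A}} + S_i^{\top}\boldsymbol{\alpha}^{\rm bot}_{\mathcal{A}} \in \mathbb{F}_q^{1\times|\mathcal{A}|}.
\]
Since $\boldsymbol{\alpha}^{\rm bot}$ is MDS of size $(T+1)\times K$, any $|\mathcal{A}|\le T+1$ of its columns are linearly independent. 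So the $(T+1)\times|\mathcal{A}|$ matrix $\boldsymbol{\alpha}^{\rm bot}_{\mathcal{A}}$ has full column rank, and the linear map $S_i\mapsto S_i^{\top}\boldsymbol{\alpha}^{\rm bot}_{\mathcal{A}}$ is surjective onto $\mathbb{F}_q^{1\times|\mathcal{A}|}$. Because $S_i$ is uniform and independent of $N_i$, the term $S_i^{\top}\boldsymbol{\alpha}^{\rm bot}_{\mathcal{A}}$ is uniform on $\mathbb{F}_q^{|\mathcal{A}|}$ and independent of $N_i$. Thus, conditioned on any value of $N_i$, $O_i$ is uniform, and $I(N_i;O_i)=0$ by the one-time-pad argument; this is the same mechanism as \eqref{Tprivacy}.

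Combining the three steps gives \eqref{Tprivacy1}. The only delicate point is verifying full column rank of $\boldsymbol{\alpha}^{\rm bot}_{\mathcal{A}}$, which is where $|\mathcal{A}|\le T+1$ and the $(T+1)$-private MDS property are used. The hypotheses on $\mathcal{U}_1,\mathcal{U}_2$ and $U>T+1$ are not needed here beyond guaranteeing that $\boldsymbol{\alpha}^{\rm top}$ has a positive number of rows, i.e., $U-T-1\ge 1$.
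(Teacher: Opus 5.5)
Your proof is correct, but it is organized differently from the paper's. The paper treats the lemma as a corollary of the global privacy property \eqref{Tprivacy}, which says that all projections $\{[\boldsymbol{Q}_i]_k\}_{i\in[K],k\in\mathcal{A}}$ are jointly independent of $\{N_i\}_{i\in[K]}$. It takes \eqref{Tprivacy} as given and runs a short entropy computation that splits the observation by variable type. The colluders' own masks $\{N_k\}_{k\in\mathcal{A}}$ are independent of $\{N_i\}_{i\notin\mathcal{A}}$, the projections are independent of all the $N_i$, and the bound that the joint entropy of the observation is at most $H(\{N_k\}_{k\in\mathcal{A}})$ plus the entropy of the projections closes the argument. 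You instead split along the mutually independent key pairs $(N_i,S_i)$. Step~1 discards the colluders' own indices using cross-index independence, Step~2 factorizes over $i\notin\mathcal{A}$, and Step~3 actually proves the per-index content of \eqref{Tprivacy} from the full column rank of $\boldsymbol{\alpha}^{\mathrm{bot}}_{\mathcal{A}}$. The paper's route is shorter once \eqref{Tprivacy} is granted. However, the paper justifies \eqref{Tprivacy} only by the loose remark that the projections ``depend only on $\{S_i\}$'', which is not literally true, since they also depend on $N_i$. Your Step~3 supplies the precise one-time-pad reason: $S_i^{\top}\boldsymbol{\alpha}^{\mathrm{bot}}_{\mathcal{A}}$ is uniform and independent of $N_i$. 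Your decomposition also makes explicit that the masking is needed only for indices outside $\mathcal{A}$, and that the hypotheses on $\mathcal{U}_1,\mathcal{U}_2$ play no role.
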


\begin{IEEEproof}
Since $|\mathcal{T}\cup\{u\}|\le T+1$, by the $(T+1)$-privacy property in~\eqref{Tprivacy}, we have
\begin{align}
    &I\left( \{N_{i}\}_{i\in[K]\setminus(\mathcal{T}\cup\{u\})}; \{N_{k},\{[\boldsymbol{Q}_{i}]_{k}\}_{i\in[K]}\}_{k\in(\mathcal{T}\cup\{u\})} \right)\notag\\
    =&H\left(\{N_{k},\{[\boldsymbol{Q}_{i}]_{k}\}_{i\in[K]}\}_{k\in(\mathcal{T}\cup\{u\})}\right)-\notag\\
    &H\left(\{N_{k},\{[\boldsymbol{Q}_{i}]_{k}\}_{i\in[K]}\}_{k\in(\mathcal{T}\cup\{u\})}\,\middle|\, \{N_{i}\}_{i\in[K]\setminus(\mathcal{T}\cup\{u\})}\right)\notag\\
    =&H\left(\{N_{k},\{[\boldsymbol{Q}_{i}]_{k}\}_{i\in[K]}\}_{k\in(\mathcal{T}\cup\{u\})}\right)\notag\\
    &-H\left(\{N_{k}\}_{k\in(\mathcal{T}\cup\{u\})}\,\middle|\, \{N_{i}\}_{i\in[K]\setminus(\mathcal{T}\cup\{u\})}\right)\notag\\
    &-H\left(\{\{[\boldsymbol{Q}_{i}]_{k}\}_{i\in[K]}\}_{k\in(\mathcal{T}\cup\{u\})}\,\middle|\, \{N_{i}\}_{i\in[K]}\right)\\
    =&H\left(\{N_{k},\{[\boldsymbol{Q}_{i}]_{k}\}_{i\in[K]}\}_{k\in(\mathcal{T}\cup\{u\})}\right)\notag\\
    &
    -H\left(\{N_{k}\}_{k\in(\mathcal{T}\cup\{u\})}\right)-H\left(\{\{[\boldsymbol{Q}_{i}]_{k}\}_{i\in[K]}\}_{k\in(\mathcal{T}\cup\{u\})}\right)\notag\\
    &+\underbrace{I\left(\{\{[\boldsymbol{Q}_{i}]_{k}\}_{i\in[K]}\}_{k\in(\mathcal{T}\cup\{u\})}; \{N_{i}\}_{i\in[K]}\right)}_{\overset{(\ref{Tprivacy})}{=}0}\label{lemma1t1}\\
    \leq &H\left(\{N_{k}\}_{k\in(\mathcal{T}\cup\{u\})}\right)
    +H\left(\{\{[\boldsymbol{Q}_{i}]_{k}\}_{i\in[K]}\}_{k\in(\mathcal{T}\cup\{u\})}\right)\notag\\
    &-H\left(\{N_{k}\}_{k\in(\mathcal{T}\cup\{u\})}\right)\notag\\
    &
    -H\left(\{\{[\boldsymbol{Q}_{i}]_{k}\}_{i\in[K]}\}_{k\in(\mathcal{T}\cup\{u\})}\right)\\
    =0.
\end{align}
where the fourth term in~\eqref{lemma1t1} equals zero by~\eqref{Tprivacy}, which follows from the $(T+1)$-privacy property of the MDS matrix $\boldsymbol{\alpha}$.
\end{IEEEproof}

With the correlated secret keys fully specified, we proceed to describe the message transmissions over two rounds.

\textbf{First round:}
Each user transmits
\begin{align}
X_{k} = W_{k} + N_{k}, \quad \forall k\in[K].
\label{r1h1}
\end{align}

After the first round, each user determines the set of surviving users $\mathcal{U}_1$ based on the received messages, and subsequently requests the transmission of the second-round messages.

\textbf{Second round:}
To enable the recovery of the aggregated masking variables at each user, each surviving user $k\in\mathcal{U}_1$ transmits
\begin{align}
Y^{\mathcal{U}_1}_{k} = \sum_{i\in\mathcal{U}_1} [\boldsymbol{Q}_{i}]_{k}.
\label{r2h1}
\end{align}

With the correlated keys and two-round message transmissions fully specified, we proceed to analyze the scheme's achievable rate, correctness, and information-theoretic security.






\emph{Rate:}
Each user input has length $L = U - T - 1$ symbols over $\mathbb{F}_q$. 
In the first round, each user transmits $L_X = U - T -1$ symbols, yielding
$R_X = \frac{L_X}{L} = 1.$ In the second round, each user transmits a single symbol, i.e., $L_Y = 1$, which gives $R_Y = \frac{L_Y}{L} = \frac{1}{U - T - 1}.$ Therefore, the proposed scheme achieves the rate pair $\left(1,\frac{1}{U-T-1}\right),$ while satisfying both the correctness and security requirements.

\emph{Correctness:}  
Each user $k\in \mathcal{U}_2$ receives the second-round messages $\{Y^{\mathcal{U}_1}_k\}_{k\in \mathcal{U}_2}$, which, by~\eqref{r2h1}, equal
$\left\{\sum_{i\in\mathcal{U}_1} [\boldsymbol{Q}_{i}]_{k}\right\}_{k\in \mathcal{U}_2}$.
Each encoded symbol $[\boldsymbol{Q}_{i}]_{k}$ is defined as
$
[\boldsymbol{Q}_{i}]_{k}
\triangleq
\bigl(N_{i}(1), \ldots, N_{i}(U - T - 1), S_{i}(1), \ldots, S_{i}(T+1)\bigr)\boldsymbol{\alpha}_{k},
$
where $\boldsymbol{\alpha} = [\boldsymbol{\alpha}_{k}]_{k\in [K]} \in \mathbb{F}_q^{U \times K}$ is an MDS matrix.
By the MDS property, any $U$ columns of $\boldsymbol{\alpha}$ are linearly independent. Since each user $u\in \mathcal{U}_2$ receives messages from at least $|\mathcal{U}_2|\ge U$ users, it obtains at least $U$ coded symbols of the form
$\sum_{i\in\mathcal{U}_1} [\boldsymbol{Q}_{i}]_{k}$.
These correspond to $U$ linearly independent columns of $\boldsymbol{\alpha}$, which enables recovery of the aggregated keys vector
$
\sum_{i\in \mathcal{U}_1}
\bigl(N_{i}(1), \ldots, N_{i}(U - T - 1), S_{i}(1), \ldots, S_{i}(T+1)\bigr).
$
Equivalently, each user can recover $\sum_{i\in \mathcal{U}_1} N_{i}$ and $\sum_{i\in \mathcal{U}_1} S_{i}$ exactly.

Finally, combining the recovered aggregated keys with the sum of the first-round messages,
$
\sum_{k\in \mathcal{U}_1} X_{k}
\overset{\eqref{r1h1}}{=}
\sum_{k\in \mathcal{U}_1} (W_{k} + N_{k}),
$
each user can subtract $\sum_{k\in \mathcal{U}_1} N_{k}$ and thus uniquely recover the desired aggregation
$
\sum_{k\in \mathcal{U}_1} W_{k}
$
with zero decoding error.

\emph{Security:}  
Each user $k\in \mathcal{U}_2$ collects messages from other users.
The proposed scheme guarantees that, even if a user colludes with any set of at most $T$ users, it can only learn the aggregate of the surviving users in the first round, and obtains no additional information about the individual messages of the non-colluding users.
\begin{align}
&I\Bigg(\{W_k\}_{k \in [K]}; \{X_k\}_{k \in [K]\setminus \{u\}}, \{Y^{\mathcal{U}_1}_k\}_{k \in \mathcal{U}_1 \setminus \{u\}} \Bigg| \notag\\
&\sum_{k \in \mathcal{U}_1} W_k, \{W_k, Z_k\}_{k \in \mathcal{T} \cup \{u\}} \Bigg)\\
=&H\Bigg( \{X_k\}_{k \in [K]\setminus \{u\}}, \{Y^{\mathcal{U}_1}_k\}_{k \in \mathcal{U}_1 \setminus \{u\}} \Bigg|\sum_{k \in \mathcal{U}_1} W_k, \notag\\
& \{W_k, Z_k\}_{k \in \mathcal{T} \cup \{u\}} \Bigg)-H\Bigg( \{X_k\}_{k \in [K]\setminus \{u\}}, \notag\\
&\{Y^{\mathcal{U}_1}_k\}_{k \in \mathcal{U}_1 \setminus \{u\}} | \{W_k\}_{k \in [K]} , \{W_k, Z_k\}_{k \in \mathcal{T} \cup \{u\}} \Bigg)\\
=&H\Bigg( \{W_k+N_k\}_{k \in [K]\setminus \{u\}}, \Bigg\{ \sum_{i\in\mathcal{U}_1} [\boldsymbol{Q}_{i}]_{k}\Bigg\}_{k \in \mathcal{U}_1 } \Bigg|\sum_{k \in \mathcal{U}_1} W_k, \notag\\
& \{W_k, Z_k\}_{k \in \mathcal{T} \cup \{u\}} \Bigg)-H\Bigg( \{W_k+N_k\}_{k \in [K]\setminus \{u\}},\notag\\
& \Bigg\{ \sum_{i\in\mathcal{U}_1} [\boldsymbol{Q}_{i}]_{k}\Bigg\}_{k \in \mathcal{U}_1 } \Bigg|\{W_k\}_{k \in [K]} , \{W_k, Z_k\}_{k \in \mathcal{T} \cup \{u\}} \Bigg)\\
=&H\Bigg( \{W_k+N_k\}_{k \in [K]\setminus (\mathcal{T} \cup \{u\})} \Bigg|\sum_{k \in \mathcal{U}_1} W_k, \notag\\
&\{W_k, Z_k\}_{k \in \mathcal{T} \cup \{u\}} \Bigg)+H\Bigg( \sum_{i\in \mathcal{U}_1} N_{i}, \sum_{i\in \mathcal{U}_1} S_{i} \Bigg| \notag\\
&\{W_k+N_k\}_{k \in [K]\setminus \{u\}},\sum_{k \in \mathcal{U}_1} W_k,  \{W_k, Z_k\}_{k \in \mathcal{T} \cup \{u\}} \Bigg)-\notag\\
&H( \{W_k+N_k\}_{k \in [K]\setminus (\mathcal{T} \cup \{u\})} |\{W_k\}_{k \in [K]} , \{ Z_k\}_{k \in \mathcal{T} \cup \{u\}} )\notag\\
&-H\Bigg( \sum_{i\in \mathcal{U}_1} N_{i}, \sum_{i\in \mathcal{U}_1} S_{i} \Bigg|\{W_k+N_k\}_{k \in [K]\setminus \{u\}}, \notag\\
&\{W_k\}_{k \in [K]} ,  \{W_k, Z_k\}_{k \in \mathcal{T} \cup \{u\}} \Bigg)\label{serversecpft1}\\
\leq&H\Bigg( \{W_k+N_k\}_{k \in [K]\setminus (\mathcal{T} \cup \{u\})}  \Bigg) \notag\\
&+\underbrace{H\Bigg( \sum_{i\in \mathcal{U}_1} N_{i}\Bigg|\{W_k+N_k\}_{k \in \mathcal{U}_1},\sum_{k \in \mathcal{U}_1} W_k\Bigg)}_{{=}0}\notag\\
&+H\Bigg(  \sum_{i\in \mathcal{U}_1} S_{i} \Bigg| \sum_{i\in \mathcal{U}_1} N_{i}, \{W_k+N_k\}_{k \in [K]\setminus \{u\}},\notag\\
&\sum_{k \in \mathcal{U}_1} W_k,  \Bigg\{W_k, N_{k}, \Bigl\{[\boldsymbol{Q}_{i}]_{k}\Bigr\}_{i\in[K]}\Bigg\}_{k \in \mathcal{T} \cup \{u\}} \Bigg)\notag\\
&-H\Bigg( \{N_k\}_{k \in [K]\setminus (\mathcal{T} \cup \{u\})} \Bigg|\notag\\
&\Bigg\{ N_{k},
\Bigl\{[\boldsymbol{Q}_{i}]_{k}\Bigr\}_{i\in[K]}\Bigg\}_{k \in \mathcal{T} \cup \{u\}} \Bigg)\notag\\
&-\underbrace{H\Bigg( \sum_{i\in \mathcal{U}_1} N_{i}\Bigg|\{N_k\}_{k \in [K]}, \{W_k\}_{k \in [K]} ,  \{ Z_k\}_{k \in \mathcal{T} \cup \{u\}} \Bigg)}_{{=}0}\notag\\
&-H\Bigg( \sum_{i\in \mathcal{U}_1} S_{i}\Bigg|\{N_k\}_{k \in [K]}, \{W_k\}_{k \in [K]} ,\notag\\
&\{  N_{k}, \Bigl\{[\boldsymbol{Q}_{i}]_{k}\Bigr\}_{i\in[K]}\}_{k \in \mathcal{T} \cup \{u\}} \Bigg)\label{serversecpft2}\\
=&H\Bigg( \{W_k+N_k\}_{k \in [K]\setminus (\mathcal{T} \cup \{u\})}  \Bigg) \notag\\
&-H( \{N_k\}_{k \in [K]\setminus (\mathcal{T} \cup \{u\})} )+\notag\\
&\underbrace{I\Bigg( \{N_k\}_{k \in [K]\setminus (\mathcal{T} \cup \{u\})} ; \Bigg\{ N_{k},
\Bigl\{[\boldsymbol{Q}_{i}]_{k}\Bigr\}_{i\in[K]}\Bigg\}_{k \in \mathcal{T} \cup \{u\}} \Bigg)}_{\overset{(\ref{Tprivacy1})}{=}0}\label{serversecpft3}\\
=&|[K]\setminus (\mathcal{T} \cup \{u\})|-|[K]\setminus (\mathcal{T} \cup \{u\})|=0.
\end{align}

We next justify the intermediate steps~(\ref{serversecpft1})--(\ref{serversecpft3}).
In~(\ref{serversecpft1}), the second and fourth entropy terms follow from the MDS structure of the second-round encoding. Specifically, the collection $\{Y^{\mathcal{U}_1}_k\}_{k\in(\mathcal{U}_1\setminus\{u\})}$ allows recovery of only the aggregated keys $\sum_{i\in\mathcal{U}_1} N_{i}$ and $\sum_{i\in\mathcal{U}_1} S_{i}$. Given the aggregate $\sum_{k\in\mathcal{U}_1} W_{k}$, these quantities are independent of the individual users' messages and can be separated as shown.
In~(\ref{serversecpft2}), the second and fourth terms are zero since $\sum_{i\in \mathcal{U}_1} N_{i}$ is uniquely determined by $\{W_{k}+N_{k}\}_{k\in\mathcal{U}_1}$ together with $\sum_{k\in\mathcal{U}_1} W_{k}$.
The third term equals the sixth term. When $|\mathcal{T}\cup\{u\}| = T+1$, both terms are zero since $\sum_{i\in \mathcal{U}_1} S_{i}$ is uniquely determined by $\sum_{i\in \mathcal{U}_1} N_{i}$ and the second-round encoding coefficients known to the colluding users. When $|\mathcal{T}\cup\{u\}| < T+1$, the random variable $\sum_{i\in \mathcal{U}_1} S_{i}$ is independent of all conditioned variables, and hence both terms equal $H\!\left(\sum_{i\in \mathcal{U}_1} S_{i}\right)$.
Finally, the third term in~(\ref{serversecpft3}) equals zero due to the $(T+1)$-privacy property in~\eqref{Tprivacy1}, which guarantees that $\{N_k\}_{k \in [K]\setminus (\mathcal{T} \cup \{u\})}$ is independent of the information available to the colluding users.

\section{Converse Proof of Theorem \ref{thm:main}}\label{sec:con}

Before presenting the converse proof, we first establish a basic property that follows from the independence between the inputs $\{W_{k}\}_{k\in[K]}$ and the secret keys $\{Z_{k}\}_{k\in[K]}$, together with the uniform distribution of $\{W_{k}\}_{k\in[K]}$. This property is formalized in the following lemma, which will be repeatedly used in the sequel.

\begin{lemma}\label{lemma:indc}
For any $V_0 < V_1$, $V_0 < V_2$, and $V_1 \neq V_2$, the following holds:
\begin{eqnarray}
I \left(\sum_{k\in[V_1]} W_k ; \sum_{k\in[V_2]} W_k, \left\{ W_k, Z_k \right\}_{k\in[V_0]} \right) = 0. \label{eq:indc}
\end{eqnarray}
\end{lemma}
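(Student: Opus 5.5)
The plan is to reduce the mutual information in \eqref{eq:indc} to a statement about uniform vectors over $\mathbb{F}_q$. First I will use the independence structure \eqref{ind}: the inputs $\{W_k\}_{k\in[K]}$ are mutually independent, i.i.d.\ uniform, and jointly independent of all keys. Under this structure the keys $\{Z_k\}_{k\in[V_0]}$ can be removed from the conditioning. By the chain rule,
\begin{align}
&I\Big(\sum_{k\in[V_1]} W_k ; \sum_{k\in[V_2]} W_k, \{W_k,Z_k\}_{k\in[V_0]}\Big)\notag\\
=&\, I\Big(\sum_{k\in[V_1]} W_k ; \sum_{k\in[V_2]} W_k, \{W_k\}_{k\in[V_0]}\Big)\notag\\
&+ I\Big(\sum_{k\in[V_1]} W_k ; \{Z_k\}_{k\in[V_0]} \,\Big|\, \sum_{k\in[V_2]} W_k, \{W_k\}_{k\in[V_0]}\Big).
\end{align}
The second term vanishes. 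Every quantity in it except the keys is a function of $\{W_k\}_{k\in[K]}$, and $\{Z_k\}$ is independent of $\{W_k\}$, so $\{Z_k\}$ remains independent of these quantities given any function of the $W$'s.

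Next, I will remove the known inputs $\{W_k\}_{k\in[V_0]}$. Since $V_0<V_1$ and $V_0<V_2$, both sums split as $\sum_{k\in[V_0]}W_k$ plus a sum over $\{V_0+1,\dots,V_1\}$ (respectively $\{V_0+1,\dots,V_2\}$). Conditioning on $\{W_k\}_{k\in[V_0]}$, which is independent of the remaining inputs, lets me subtract the known part. The first term then becomes
\begin{equation}
I\Big(A ; B\Big),\quad A\triangleq\sum_{k=V_0+1}^{V_1} W_k,\ \ B\triangleq\sum_{k=V_0+1}^{V_2} W_k .
\end{equation}
The index sets of $A$ and $B$ are both nonempty.

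The key step, and the only one with real content, is to show that $A$ and $B$ are independent. Assume without loss of generality that $V_1<V_2$; this is where the hypothesis $V_1\neq V_2$ is used. Then
\begin{equation}
B = A + C,\qquad C\triangleq\sum_{k=V_1+1}^{V_2} W_k .
\end{equation}
Here $C$ is a sum over a nonempty index set disjoint from that of $A$. Since the $W_k$ are independent and uniform over $\mathbb{F}_q^L$, $C$ is uniform and independent of $A$. Hence, for every value of $A$, $B=A+C$ is uniform over $\mathbb{F}_q^L$. This gives $H(B\mid A)=L=H(B)$, so $I(A;B)=0$. If instead $V_2<V_1$, the same argument applies with the roles of $A$ and $B$ swapped.

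I expect no real obstacle. The only care needed is bookkeeping of the hypotheses. $V_0<V_1,V_2$ guarantees that both reduced sums are nonempty, so each is uniform. $V_1\neq V_2$ guarantees that the one-time-pad variable $C$ exists; if $V_1=V_2$ the two sums would coincide and the mutual information would equal $L$.
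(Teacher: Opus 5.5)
Your proposal is correct and is essentially the paper's proof. The paper also first removes $\{Z_k\}_{k\in[V_0]}$ via \eqref{ind}. It then writes the mutual information as $H(\sum_{k\in[V_1]}W_k)-H(\sum_{k\in[V_1]}W_k\mid\sum_{k\in[V_2]}W_k,\{W_k\}_{k\in[V_0]})=L-L$, simply asserting that the second term equals $L$ because $V_1\neq V_2$, $V_0<V_1,V_2$, and the inputs are independent and uniform.

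Your reduction to $I(A;B)$ with the one-time pad $C$ is a more explicit justification of that same step. One small wording point: $\{W_k\}_{k\in[V_0]}$ is an argument of the mutual information rather than a conditioning variable, so the reduction also needs $I(\sum_{k\in[V_1]}W_k;\{W_k\}_{k\in[V_0]})=0$. This holds because $A$ is a nonempty sum and therefore masks the known part, which your closing remark on nonemptiness already covers.
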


{\it Proof of Lemma \ref{lemma:indc}:}
\begin{align}
& I \left(\sum_{k\in[V_1]} W_k ; \sum_{k\in[V_2]} W_k, \left\{ W_k, Z_k \right\}_{k\in[V_0]} \right) \notag\\
\overset{(\ref{ind})}{=}& I \left(\sum_{k\in[V_1]} W_k ; \sum_{k\in[V_2]} W_k, \left\{ W_k \right\}_{k\in[V_0]} \right) \\
=& H \left(\sum_{k\in[V_1]} W_k  \right)\notag\\
&-H \left(\sum_{k\in[V_1]} W_k \Bigg| \sum_{k\in[V_2]} W_k, \left\{ W_k \right\}_{k\in[V_0]} \right) \label{lem1t1} \\
=& L - L =0,
\end{align}
where the second term in (\ref{lem1t1}) equals $L$ since $V_1\neq V_2$, $V_0<V_1$, $V_0<V_2$, and $\{W_k\}_{k\in[K]}$ are independent and uniformly distributed.

Building on Lemma~\ref{lemma:indc}, we proceed in several steps. We first characterize the infeasible regime when $U \leq T+1$ in Subsection~\ref{inf}. We then illustrate the main converse ideas through two representative examples in Subsections~\ref{conexpfr1} and~\ref{conexpfr2}. Finally, for the feasible regime $U > T+1$, we derive the converse bounds on the first-round and second-round communication rates in Subsections~\ref{RX1RY1} and~\ref{R2}, respectively.


\subsection{Infeasibility Proof When $U \leq T+1$} \label{inf}

We show that when $U \leq T+1$, the system constraints are contradictory, and hence $\mathcal{R}^* = \emptyset$.  
To establish the contradiction, consider two distinct integers $U_1 \neq U_2$ such that $U < U_2$ and $U < U_1$. Let $\mathcal{U}_1 = [U_1]$ and $\mathcal{T} = [U-1]$, and focus on user $U$, for which $\mathcal{T} \cup \{U\} = [U]$.  
Note that $U \leq T+1 \leq K-2$, so this choice is valid.
\begin{align}
L=&H\Big( \sum_{k\in[U_2]} W_k \Big)\\
    =&I\Bigg( \sum_{k\in[U_2]} W_k; \left\{X_k\right\}_{k\in[U_2]\setminus \{U\}}, \left\{ Y_k^{[U_2]} \right\}_{k\in[U-1]},\notag\\
    &W_U,Z_U \Bigg) +  \underbrace{H\Bigg( \sum_{k\in[U_2]} W_k \Bigg| \left\{X_k\right\}_{k\in[U_2]\setminus \{U\}},}\notag\\
    &\underbrace{\left\{ Y_k^{[U_2]} \right\}_{k\in[U-1]},W_U,Z_U \Bigg)}_{\overset{(\ref{correctness})}{=}0} \label{inft1}\\
    \leq & I\Bigg( \sum_{k\in[U_2]} W_k; \left\{X_k\right\}_{k\in[U_2]\setminus \{U\}}, \left\{ Y_k^{[U_2]} \right\}_{k\in[U-1]},\notag\\
    &\sum_{k\in[U_1]} W_k, \left\{ W_k, Z_k \right\}_{k\in[U]} \Bigg) \\
    = & I\Bigg( \sum_{k\in[U_2]} W_k; \left\{X_k\right\}_{k\in[U_2]\setminus \{U\}}, \left\{ Y_k^{[U_2]} \right\}_{k\in[U-1]}\Bigg|\notag\\
     & \sum_{k\in[U_1]} W_k, \left\{ W_k, Z_k \right\}_{k\in[U]} \Bigg) \notag\\
     &+ \underbrace{I\Bigg( \sum_{k\in[U_2]} W_k; \sum_{k\in[U_1]} W_k, \left\{ W_k, Z_k \right\}_{k\in[U]} \Bigg)}_{=0} \label{inft2}\\
    \overset{(\ref{messageY})}{=} & I\Bigg( \sum_{k\in[U_2]} W_k; \left\{X_k\right\}_{k\in[U_2]\setminus\{U\}}\Bigg| \notag\\
    &\sum_{k\in[U_1]} W_k, \left\{ W_k, Z_k \right\}_{k\in[U]} \Bigg) \label{inft3}\\
    \leq & I\Bigg( \left\{W_k\right\}_{k\in[K]\setminus\{U\}}; \left\{X_k\right\}_{k\in[K]}, \left\{ Y_k^{[U_1]} \right\}_{k\in\mathcal{U}_1\setminus [U]}\Bigg|\notag\\
    &\sum_{k\in[U_1]} W_k, \left\{ W_k, Z_k \right\}_{k\in[U]} \Bigg) \\
    =&0, 
\end{align}
where the second term in (\ref{inft1}) is zero due to the correctness constraint (\ref{correctness}).  
The second term in (\ref{inft2}) is zero by Lemma~\ref{lemma:indc}.  
(\ref{inft3}) follows since the second-round messages $\left\{ Y_k^{[U_2]} \right\}_{k\in[U-1]}$ are deterministic functions of $\left\{ W_k, Z_k \right\}_{k\in[U]}$.  
The last step follows from the security constraint (\ref{security}) with $\mathcal{T}\cup \{U\}=[U]$.

Having established the impossibility result for
$U \leq T+1$, we next focus on the feasible regime
$U>T+1$ and derive converse bounds on the communication
rates.

Before presenting the general converse proof, we first use
Example~1 to illustrate the key converse ideas in a simple
setting. These examples provide useful intuition for the
general arguments developed later. In particular, the first
example explains why the first-round communication must
already contain sufficient information about a user's input
if that user may drop out in the second round, while the
second example illustrates how the security constraint
effectively eliminates part of the communication from
contributing to decoding.

\subsection{Example 1: Converse for First-Round Communication} \label{conexpfr1}

We briefly illustrate the main converse idea using
Example~1 with $K=4$, $U=3$, and $T=0$.
Consider the case where
$
\mathcal U_1=\{1,2,3,4\}
$
and
$
\mathcal U_2=\{2,3,4\}.
$
That is, all users survive the first round, while User~1
drops out in the second round.

In this case, the surviving users in $\mathcal U_2$
must still recover the desired aggregation
$
W_1+W_2+W_3+W_4,
$
even though User~1 does not transmit any second-round
message. Since the input $W_1$ is available only at User~1,
the information about $W_1$ contained in the final decoding
must already be carried by the first-round transmission $X_1$.
Intuitively, if $X_1$ does not contain sufficient information
about $W_1$, then after User~1 drops out in the second round,
the surviving users cannot reconstruct the desired sum.
We next formalize this intuition.
For User~2, the correctness constraint implies
\begin{align}
    0
    =&H\Big(W_1+W_2+W_3+W_4\Big|\notag\\
    &X_1,X_3,X_4,Y_3^{[4]},Y_4^{[4]},W_2,Z_2\Big)\\
    \geq&
    H\Big(W_1+W_2+W_3+W_4\Big|
    X_1,X_3,X_4,\notag\\
    &Y_3^{[4]},Y_4^{[4]},
    W_2,Z_2,W_3,Z_3,W_4,Z_4\Big)\\
    \overset{(\ref{messageX})(\ref{messageY})}{=}&
    H\Big(W_1+W_2+W_3+W_4\Big|\notag\\
    &X_1,W_2,Z_2,W_3,Z_3,W_4,Z_4\Big)\label{covex1t1}\\
    =&
    H\Big(W_1\Big|
    X_1,W_2,Z_2,W_3,Z_3,W_4,Z_4\Big).\label{covex1t2}
\end{align}

Equation~(\ref{covex1t1}) follows because
$
X_3,X_4,Y_3^{[4]},Y_4^{[4]}
$
are deterministic functions of
$
W_3,Z_3,W_4,Z_4.
$
Therefore, after conditioning on
$
W_2,Z_2,W_3,Z_3,W_4,Z_4,
$
the only remaining uncertainty in the desired aggregation
comes from $W_1$. Since User~1 does not participate in the
second round, the information about $W_1$ required for
decoding can only be conveyed through the first-round
message $X_1$.

Now consider the entropy of $W_1$:
\begin{align}
    L
    \overset{(\ref{h2})}{=}&H(W_1)\\
    \overset{(\ref{ind})}{=}&H(W_1|W_2,Z_2,W_3,Z_3,W_4,Z_4)\\
    =&I(W_1;X_1|W_2,Z_2,W_3,Z_3,W_4,Z_4)\notag\\
    &+\underbrace{
    H(W_1|W_2,Z_2,W_3,Z_3,W_4,Z_4,X_1)
    }_{\overset{(\ref{covex1t2})}{=}0}\\
    \leq&
    H(X_1|W_2,Z_2,W_3,Z_3,W_4,Z_4)\\
    \leq&
    H(X_1)\\
    \leq&
    L_X.
\end{align}

Therefore,
\begin{align}
    R_1=\frac{L_X}{L}\geq 1.
\end{align}

This simple example captures the key intuition behind the
general converse proof: if a user may drop out in the second
round while its input must still be included in the desired
aggregation, then the first-round transmission must already
contain sufficient information about that user's input.

\subsection{Example 1: Converse for Second-Round Communication} \label{conexpfr2}

We next illustrate the converse idea for the second-round
communication using Example~1 with
$K=4,U=3$ and $T=0
$.
Unlike the converse for the first-round communication,
the converse for the second-round communication
 relies on the security constraint.
Intuitively, security forces part of the communication
to become useless for decoding, and hence the remaining
second-round messages must carry all information required
for recovering the desired aggregation.

Consider
$
\mathcal U_1=\{1,2,3,4\}
$
and
$
\mathcal U_2=\{1,2,3\}.
$
Since $T=0$, the security constraint requires that each user
learns nothing beyond the desired aggregation.
We focus on User~1.
From the security constraint, User~1 together with the
first-round messages and part of the second-round messages
should not obtain any useful information about the desired
aggregation beyond the target sum itself. Specifically,
\begin{align}
0
\overset{(\ref{security})}{=}&
I\Big(W_1,W_2,W_3,W_4;X_2,X_3,X_4,Y_2^{[4]},Y_3^{[4]},Y_4^{[4]}\notag\\
&\Big|W_1,Z_1,W_1+W_2+W_3+W_4\Big)\label{ex1ry1}\\
\geq&I\Big(W_1+W_2+W_3;X_2,X_3,\notag\\
&\Big|W_1,Z_1,W_1+W_2+W_3+W_4\Big)\label{ex1ry2}\\
=&I\Big(W_1+W_2+W_3;X_2,X_3,\notag\\
&W_1,Z_1,W_1+W_2+W_3+W_4\Big)\notag\\
&-\underbrace{I\Big(W_1+W_2+W_3; W_1,Z_1,W_1+W_2+W_3+W_4\Big)}_{\overset{(\ref{ind})}{=}0}\label{ex1ry3}\\
\geq&I\Big(W_1+W_2+W_3;X_2,X_3,W_1,Z_1\Big).\label{ex1ry4}
\end{align}

Equation~(\ref{ex1ry4}) shows that the first-round
messages do not provide useful information for recovering
the desired aggregation once the side information of
User~1 and the target sum are given. Consequently,
the remaining useful information required for decoding
must come from the second-round messages.

Next, from the correctness constraint,
User~1 must recover
$
W_1+W_2+W_3+W_4
$
from
$
X_2,X_3,Y_2^{[4]},Y_3^{[4]},W_1,Z_1.
$
Therefore,
\begin{align}
L=&H(W_1+W_2+W_3)\\
=&I\Big(W_1+W_2+W_3;X_2,X_3,Y_2^{[3]},Y_3^{[3]},W_1,Z_1\Big)\notag\\
&+\underbrace{H\Big(W_1+W_2+W_3\Big|X_2,X_3,Y_2^{[3]},Y_3^{[3]},W_1,Z_1\Big)
}_{\overset{(\ref{correctness})}{=}0}\\
=&\underbrace{I\Big(W_1+W_2+W_3;X_2,X_3,W_1,Z_1\Big)}_{\overset{(\ref{ex1ry4})}{=}0}\notag\\
&+I\Big(W_1+W_2+W_3;Y_2^{[3]},Y_3^{[3]}\Big|X_2,X_3,W_1,Z_1\Big)\\
\leq&H\Big(Y_2^{[3]},Y_3^{[3]}\Big)\\
\leq&H\Big(Y_2^{[3]}\Big)+H\Big(Y_3^{[3]}\Big)\\
\leq &2L_Y.
\end{align}

Hence,
\begin{align}
R_2=\frac{L_Y}{L}\geq \frac12.
\end{align}

This example illustrates the key converse intuition:
due to the security constraint, part of the communication
becomes useless for decoding, and therefore the remaining
second-round messages must carry all information required
for recovering the desired aggregation.

\subsection{Converse for $R_1 \geq 1$ When $U > T+1$} \label{RX1RY1}

We now establish a lower bound on the first-round communication rate. Intuitively, consider any user $u$ that survives the first round but may drop out in the second round. Since $W_u$ is only available at User $u$, the first-round message $X_u$ must contain sufficient information about $W_u$ so that the desired sum including $W_u$ can still be recovered, which implies that $L_X \geq L$.
We next formalize this intuition.

Consider any $u \in [K]$ and $u'\in [K]\setminus \{u\}$, and set $\mathcal{U}_1 = [K]$, $\mathcal{U}_2 = [K]\backslash\{u\}$. Then from the correctness constraint (\ref{correctness}), we have
\begin{align}
0 \overset{(\ref{correctness})}{=}& H\Bigg(\sum_{k\in[K]} W_k \Bigg| \left\{X_k\right\}_{k\in[K]\setminus\{u'\}}, \left\{Y_k^{[K]}\right\}_{k\in[K]\setminus\{u,u'\}},\notag\\
&W_{u'},Z_{u'} \Bigg) \\
\geq& H\Bigg(\sum_{k\in[K]} W_k \Bigg| \left\{X_k\right\}_{k\in[K]\setminus\{u'\}}, \left\{Y_k^{[K]}\right\}_{k\in[K]\setminus\{u,u'\}},\notag\\
&\{W_k,Z_k\}_{k\in [K]\setminus\{u\}} \Bigg) \\
\overset{(\ref{messageX})(\ref{messageY})}{=}& H\left( W_{u} \Big| X_{u}, \left\{ W_k, Z_k \right\}_{k\in[K]\backslash\{u\}} \right), \label{eq:cc1}
\end{align}
where (\ref{eq:cc1}) follows since $ \left\{X_k \right\}_{k\in[K]\backslash\{u,u'\}}$ and $\{Y_k^{[K]}\}_{k\in[K]\backslash\{u,u'\}}$ are deterministic functions of $\left\{ W_k, Z_k \right\}_{k\in[K]\backslash\{u\}}$ (see (\ref{messageX}), (\ref{messageY})).
We next show that
\begin{eqnarray}
L &\overset{(\ref{h2})}{=}& H(W_{u})\\
&\overset{(\ref{ind})}{=}& H\left( W_{u} \Big|  \left\{ W_k, Z_k \right\}_{k\in[K]\backslash\{u\}} \right) \\
&=& I\left( W_{u} ; X_{u} \Big| \left\{ W_k, Z_k \right\}_{k\in[K]\backslash\{u\}} \right)\notag\\
&&+\underbrace{H\left( W_{u} \Big| X_{u}, \left\{ W_k, Z_k \right\}_{k\in[K]\backslash\{u\}} \right)}_{\overset{(\ref{eq:cc1})}{=}0} \\
&\leq& H\left(X_{u} \Big|  \left\{ W_k, Z_k \right\}_{k\in[K]\backslash\{u\}} \right) \label{eq:cr1} \\ 
&\leq& H\left(X_{u} \right) ~\leq~ L_X, \label{eq:cr2}
\end{eqnarray}
which implies
\begin{eqnarray}
R_1 \overset{(\ref{rate})}{=} \frac{L_X}{L} \geq 1.
\end{eqnarray}

\subsection{Converse for $R_2 \geq \frac{1}{U-T-1}$ When $U > T+1$} \label{R2}

Intuitively, due to the security constraint, all first-round messages and any $T+1$ second-round messages do not contribute useful information for decoding the desired sum (see (\ref{eq:cd})). Hence, all useful information must come from the remaining $(U-T-1)$ second-round messages, so that each of them must carry at least $L/(U-T-1)$ symbols of information on average.
We now formalize this argument.

Consider $\mathcal{U}_1 = [U+1]$, $\mathcal{U}_2 = [U]$, and $\mathcal{T} = [T]$. 
Recall that $U>T+1$. For User $u=T+1$, from the security constraint~(\ref{security}), we have
\begin{align}
0 \overset{(\ref{security})}{=}& I\Bigg(\left\{W_k\right\}_{k\in[K]}; \left\{X_k\right\}_{k\in[K]\setminus\{T+1\}},\notag\\
&\left\{Y_k^{[U+1]}\right\}_{k\in[U+1]\setminus\{T+1\}} \Bigg| \sum_{k\in[U+1]} W_k,\notag\\
& \left\{ W_k, Z_k \right\}_{k\in([T]\cup\{T+1\})} \Bigg) \\
\geq& I\Bigg( \sum_{k\in[U]} W_k; \left\{X_k\right\}_{k\in[U]\setminus\{T+1\}} \Bigg| \notag\\
& \sum_{k\in[U+1]} W_k, \left\{ W_k, Z_k \right\}_{k\in[T+1]} \Bigg) \\
\overset{(\ref{messageY})}{=}& I\Bigg( \sum_{k\in[U]} W_k; \left\{X_k\right\}_{k\in[U]\setminus\{T+1\}}, \left\{Y_k^{[U]}\right\}_{k\in[T]} \Bigg|\notag\\
&\sum_{k\in[U+1]} W_k, \left\{ W_k, Z_k \right\}_{k\in[T+1]} \Bigg)\label{covt1} \\
=&  I\Bigg( \sum_{k\in[U]} W_k; \left\{X_k\right\}_{k\in[U]\setminus\{T+1\}}, \left\{Y_k^{[U]}\right\}_{k\in[T]},\notag\\
&\sum_{k\in[U+1]} W_k, \left\{ W_k, Z_k \right\}_{k\in[T+1]} \Bigg)  \notag\\
&~- \underbrace{ I\Bigg( \sum_{k\in[U]} W_k; \sum_{k\in[U+1]} W_k, \left\{ W_k, Z_k \right\}_{k\in[T+1]} \Bigg) }_{\overset{(\ref{eq:indc})}{=} 0} \label{covt2} \\
\geq& I\Bigg( \sum_{k\in[U]} W_k;\left\{X_k\right\}_{k\in[U]\setminus\{T+1\}}, \left\{Y_k^{[U]}\right\}_{k\in[T]},\notag\\
&W_{T+1},Z_{T+1} \Bigg), \label{eq:cd}
\end{align}
where (\ref{covt1}) holds since the first-round messages $\left\{X_k\right\}_{k\in[U]\setminus\{T+1\}}$ are deterministic functions of $\left\{ W_k, Z_k \right\}_{k\in[T+1]}$ by (\ref{messageX}), (\ref{messageY}), and the second term in (\ref{covt2}) is zero by Lemma~\ref{lemma:indc}.
Next, consider $\mathcal{U}_1 = \mathcal{U}_2 = [U]$. For User $T+1$, we have
\begin{align}
 L =&  H\Bigg(\sum_{k\in[U]} W_k\Bigg)\\
 =&I\Bigg(\sum_{k\in[U]} W_k ;\left\{X_k\right\}_{k\in[U]\setminus\{T+1\}}, \left\{Y_k^{[U]}\right\}_{k\in[U]\setminus\{T+1\}},\notag\\
 &W_{T+1},Z_{T+1}\Bigg) + \underbrace{H\Bigg(\sum_{k\in[U]} W_k \Bigg| \left\{X_k\right\}_{k\in[U]\setminus\{T+1\}},}\notag\\
 &\underbrace{ \left\{Y_k^{[U]}\right\}_{k\in[U]\setminus\{T+1\}},W_{T+1},Z_{T+1} \Bigg)}_{\overset{(\ref{correctness})}{=}0} \label{covtt3}\\
=& I\Bigg(\sum_{k\in[U]} W_k ; \left\{Y_k^{[U]}\right\}_{k\in[U]\backslash[T+1]}  \Bigg| \left\{X_k\right\}_{k\in[U]\setminus\{T+1\}},\notag\\
&\left\{Y_k^{[U]}\right\}_{k\in[T]},W_{T+1},Z_{T+1} \Bigg) + \underbrace{I\Bigg( \sum_{k\in[U]} W_k;}\notag\\
&\underbrace{\left\{X_k\right\}_{k\in[U]\setminus\{T+1\}},\left\{Y_k^{[U]}\right\}_{k\in[T]},W_{T+1},Z_{T+1} \Bigg)}_{\overset{(\ref{eq:cd})}{=}0}\label{covtt4}\\
\leq& H\Bigg( \left\{Y_k^{[U]}\right\}_{k\in[U]\backslash[T+1]}  \Bigg)\notag\\
\leq& \sum_{k\in[U]\backslash[T+1]} H\left( Y_k^{[U]}\right) \leq (U-T-1) L_Y\\
\Rightarrow&~~ R_2 \overset{(\ref{rate})}{=} \frac{L_Y}{L} \geq \frac{1}{U-T-1},
\end{align}
where the second term in (\ref{covtt3}) is zero due to the correctness constraint (\ref{correctness}), and the second term in (\ref{covtt4}) is zero by (\ref{eq:cd}).

\section{Conclusion}\label{sec:conclusion & future directions}

This paper studied DSA with user dropout and $T$-colluding adversaries. We proposed a two-round communication framework that enables correct recovery of the global sum of private inputs while guaranteeing information-theoretic privacy.
We established a feasibility condition, showing that secure aggregation is achievable if and only if $U > T+1$. This result precisely characterizes the interplay between user dropout tolerance and collusion resistance in decentralized settings.
For the feasible regime, we developed an explicit coding scheme based on structured correlated keys and MDS codes. The scheme achieves optimal communication rates, namely $R_1 = 1$ for the first round and $R_2 = \frac{1}{U-T-1}$ for the second round, and thus fully characterizes the rate region under the proposed model.

An interesting future direction is to extend the proposed framework to  arbitrary or sparse networks. With limited neighborhood connectivity, recovery becomes harder because correlated keys must cancel through local, topology-constrained interactions rather than global coordination. Characterizing how network topology affects feasibility, dropout tolerance, collusion resistance, and communication rates remains open.

\bibliographystyle{IEEEtran}
\bibliography{references_secagg.bib}
\end{document}

%% file: macros.tex
\long\def\comment#1{}

\makeatletter
\newcommand{\thickhline}{%
    \noalign {\ifnum 0=`}\fi \hrule height 1pt
    \futurelet \reserved@a \@xhline
}
\newcolumntype{"}{@{\hskip\tabcolsep\vrule width 1pt\hskip\tabcolsep}}
\makeatother

\makeatletter
\newcommand{\subalign}[1]{
  \vcenter{%
    \Let@ \restore@math@cr \default@tag
    \baselineskip\fontdimen10 \scriptfont\tw@
    \advance\baselineskip\fontdimen12 \scriptfont\tw@
    \lineskip\thr@@\fontdimen8 \scriptfont\thr@@
    \lineskiplimit\lineskip
    \ialign{\hfil$\m@th\scriptstyle##$&$\m@th\scriptstyle{}##$\crcr
      #1\crcr
    }%
  }
}
\makeatother

\newtheorem{theorem}{Theorem}
\newtheorem{lemma}{Lemma}
\newtheorem{remark}{Remark}

\let\trm\textrm

\let\tit\textit

\newcommand{\eg}{e.g.\xspace}

\newcommand{\Thm}{Theorem\xspace}
\newcommand{\schm}{scheme\xspace}

\newcommand{\agg}{aggregation\xspace}

\newcommand{\indiv}{individual\xspace}

\newcommand{\comm}{communication\xspace}

\newcommand{\achvb}{achievable\xspace}

\newfont{\bbb}{msbm10 scaled 700}

\newfont{\bb}{msbm10 scaled 1100}

\newcommand{\Rc}{{\cal R}}

\newcommand{\be}{\begin{equation}}
\newcommand{\ee}{\end{equation}}
\newcommand{\bea}{\begin{eqnarray}}
\newcommand{\eea}{\end{eqnarray}}



%% file: author_journal.tex
\author{
Zhou Li,~\IEEEmembership{Member,~IEEE},
Xiang~Zhang,~\IEEEmembership{Member,~IEEE},
Yizhou Zhao,~\IEEEmembership{Member,~IEEE},
Han Yu,~\IEEEmembership{Member,~IEEE},
and Giuseppe Caire,~\IEEEmembership{Fellow,~IEEE}

\thanks{Z. Li is with the Guangxi Key Laboratory of Multimedia Communications and Network Technology, 
Guangxi University, Nanning 530004, China (e-mail: lizhou@gxu.edu.cn).}

\thanks{X. Zhang, H. Yu, and G. Caire are with the Department of Electrical Engineering and Computer Science, Technical University of Berlin, 10623 Berlin, Germany (e-mail: \{xiang.zhang, caire\}@tu-berlin.de, han.yu.1@campus.tu-berlin.de).
}

\thanks{Y. Zhao is with the College of Electronic and Information Engineering, Southwest University, Chongqing, China (e-mail: onezhou@swu.edu.cn).}

}

